\def\trace{{\it trace}}
\def\Var{{\it Var}}
\def\span{{\it Span}}
\newcommand{\UA}{{U^TAU}}
\newcommand{\ignore}[1]{}
\renewcommand{\Pr}{\mathop{\bf Pr\/}}
\newcommand{\E}{\mathop{\bf E\/}}
\newcommand{\Ex}{\mathop{\bf E\/}}
\newcommand{\R}{\mathbb R}
\newcommand{\eps}{\epsilon}
\newcommand{\half}{{\textstyle \frac12}}
\newcommand{\ynote}[1]{}
\newcommand{\pnote}[1]{}
\newcommand{\knote}[1]{}
\begin{document}

\title{Optimal query complexity for estimating the trace of a matrix }
\author{Karl Wimmer\inst{1}\thanks{ Most of the work is done when the author is visiting the  Simons Institute for the Theory of Computing, University of California-Berkeley. Supported in part by NSF grant CCF-1117079. } \and Yi Wu\inst{2}\thanks{Most of the work is done when the author is visiting the Simons Institute for the Theory of Computing, University of California-Berkeley.} \and Peng Zhang\inst{2}}
\institute{Duquesne University  \and Purdue University}
\maketitle

\begin{abstract}
Given an implicit $n\times n$ matrix $A$ with oracle access $x^TA x$ for any $x\in \R^n$,  we study the query complexity of  randomized algorithms for estimating the \emph{trace} of the  matrix.  This problem has many applications in quantum physics, machine learning, and pattern matching. Two metrics are commonly used for evaluating the estimators: i) variance; ii) a high probability multiplicative-approximation guarantee.  Almost all the known estimators are of the form $\frac{1}{k}\sum_{i=1}^k x_i^T A x_i$ for $x_i\in \R^n$ being i.i.d. for some special distribution. %\footnote{The authors in~\cite{roosta2013improved} use queries sampled from standard unit vectors without replacement. When the number of queries is very small, these queries are almost i.i.d.}.

Our main results are  summarized as follows:
\begin{enumerate}
\item We give an \emph{exact} characterization of the minimum variance unbiased estimator  in the broad class of  \emph{linear nonadaptive} estimators (which subsumes all the existing known estimators).
\item We also consider the query complexity lower bounds  for \emph{any} (possibly nonlinear and adaptive) estimators:
\begin{enumerate}
\item We show that \emph{any} estimator requires $\Omega(1/\eps)$ queries to have a guarantee of variance at most $\eps$.
\item We show that \emph{any} estimator requires $\Omega(\frac{1}{\eps^2}\log \frac{1}{\delta})$ queries to achieve a $(1\pm\eps)$-multiplicative approximation guarantee with probability at least $1 - \delta$.
\end{enumerate}
Both above lower bounds are asymptotically tight.
\end{enumerate}

As a corollary, we also resolve a conjecture in the seminal work of Avron and Toledo (Journal of the ACM 2011) regarding the sample complexity of the Gaussian Estimator.

\end{abstract}

\thispagestyle{empty}
\newpage
\setcounter{page}{1}

\section{Introduction}
 % !TEX root =main.tex
Given an $n \times n$ matrix $A=\{A_{ij}\}_{1\leq i\leq n, 1\leq j\leq n}$,  we study the problem of estimating its trace
 \[
 \trace(A) = \sum_{i=1}^n A_{ii}
 \]
with a randomized algorithm that can query $f_A(x) = x^T A x$ for  any $x\in \R^n$. The goal is to minimize the number of queries used to achieve certain type of accuracy guarantee, such as the variance of the estimate or a multiplicative approximation (which holds with high probability). Finding an estimator that achieves such an accuracy guarantee with few queries has several applications. For example, this problem is  well studied in the subject of  lattice quantum chromodynamics, since such queries are physically feasible and can be used to efficiently estimate the trace of a function of a large matrix $f(A)$. Such an estimator can also be used as a building block for many other applications including solving least-squares problems~\cite{Hut89}, computing the number of triangles in a graph~\cite{Avr10,Tso08}, and string pattern matching~\cite{A01,AGW13}. 

This problem has been well studied in the literature.  All of the previously analyzed estimators are of the form $\frac{1}{k} \sum_{i=1}^k  x_i^T  A x_i$ for $x_1,x_2,\ldots, x_k\in \R^n$; nearly all take $x_1,x_2,\ldots,x_k$ to be independent and identically distributed (i.i.d.) from some well designed distribution. For example, in~\cite{Hut89}, the author just takes each query to be a \ignore{Rademacher} random vector whose entries are i.i.d. Rademacher random variables (i.e., each coordinate is a uniformly random sample from $\{-1,1\}$); we call this the Rademacher estimator.  There are also several other alternative distributions on $x_1, x_2, \ldots, x_k$, such as drawing each query from a multivariate normal distribution~\cite{SR97}, we call this the Gaussian estimator.  Here, the coordinates of each vectors are i.i.d. Gaussian random variables.  The work of~\cite{IE04} considers the case where only one query is allowed, but that query can be a unit vector in $\mathbb{C}^n$. Other estimators occur in~\cite{drabold1993maximum,wang1994calculating}.  Recent work by~\cite{AT11}, the authors propose several new estimators such as the unit vector estimator, normalized Rayleigh-quotient trace estimator, and the mixed unit vector estimator.  One estimator that does not use i.i.d. queries is due to~\cite{roosta2013improved}; in that work, the authors propose querying random standard basis vectors without replacement.

To characterize the performance of an estimator, perhaps the most natural metric is the variance of the estimator. It is known that the Gaussian estimator has variance $2 \|A\|^2_F$ and the random Rademacher vector estimator has variance $2( \|A\|^2_F-\sum_{i=1}^n A_{ii}^2)$, where $\|A\|_F = \sqrt{\trace(A^TA)}$ is the Frobenius norm.  In recent work by Avron and Toledo~\cite{AT11}, it is suggested that the notion of a multiplicative approximation guarantee might be a better success metric of an estimator than the variance. Formally, we say an estimator is an $(\eps,\delta)$-estimator if it outputs an estimate in the interval $\left((1-\eps)\trace(A) , (1+\eps)\trace(A)\right)$ with probability at least $1-\delta$.  It should be noted that some assumptions on the matrices need to be made to have a valid $(\eps,\delta)$-estimator, as it is impossible to achieve any multiplicative approximation when the matrix could have a trace of $0$.  A natural choice is to assume that $A$ comes from the class of symmetric positive semidefinite (SPD) matrices. For a SPD matrix, the authors in~\cite{AT11} prove that the Gaussian estimator with $k=O(\frac{1}{\eps^2} \log (\frac{1}{\delta}))$ queries to the oracle is an $(\eps,\delta)$-estimator.  It was recently shown in~\cite{roosta2013improved} that the random Rademacher vector estimator is also an $(\eps,\delta)$-estimator with the same sample complexity. 

An open problem asked in~\cite{AT11} is the following:  does the Gaussian estimator require $\Omega(\frac{1}{\eps^2} \log (\frac{1}{\delta}))$ in order to be an $(\eps,\delta)$-estimator?  The authors showed that this number of queries suffices and conjectured that their analysis of the Gaussian estimator is tight with supporting evidence from empirical experiments.  The paper gives some intuition on how to show an $\Omega(\frac{1}{\eps^2})$ lower bound.  The authors suggested that the difficulty of turning this argument into a formal proof is that ``current bounds [on the $\chi^2$ cumulative distribution function] are too complex to provide a useful lower bound''.   Regarding lower bounds for trace estimators, we note the related work of~\cite{LNW14}, which considers the problem of sketching the nuclear norms of $A$ using bilinear sketches (which can be viewed as nonadaptive queries of the form $x^T Ay$). The problem is similar to estimating trace when the underlying matrix is positive semidefinite.
  
All of the above mentioned estimators (with one exception in~\cite{roosta2013improved}) use independent identically distributed queries from some special distributions, and the output is a linear combination of the query results. On the other hand, when viewing an estimator as a randomized algorithm, we can choose any distribution over the queries, and the output can be any (possibly randomized) function of the results of the queries.  Given the success of the previously mentioned estimators, it is natural to ask whether these extensions are helpful.  For example, can we get a significantly better estimator with a non i.i.d. distribution?  Can we do better with adaptive queries?  Can we do better with a nonlinear combination of the query results?

In this paper, we make progress on answering above questions and understanding the  optimal query complexity for randomized trace estimators. Below is an informal summary of our results.
\begin{enumerate}
\item Among all the linear nonadaptive trace estimators (which subsumes all the existing trace estimators), we prove  that the ``random $k$ orthogonal vector'' estimator is the minimum variance estimator.  The distribution on the queries is \emph{not} i.i.d., and we are unable to find an occurrence of this estimator in the literature regarding trace estimators.
\item We also prove two asymptotically optimal lower bounds for any (possibly adaptive and possibly nonlinear) estimator.
\begin{enumerate}
\item We show that \emph{every} trace estimator requires $\Omega(1/\eps)$ queries to have a guarantee that the variance of the estimator is at most $\eps$.
\item We show that \emph{every} $(\eps,\delta)$-estimator requires $\Omega(\frac{1}{\eps^2}\log \frac{1}{\delta})$ queries.
\end{enumerate}
\end{enumerate}

As a simple corollary, our result also confirms the above mentioned conjecture in~\cite{AT11} (as well as the tightness of the analysis of the Rademacher estimator in~\cite{roosta2013improved}). Notice our result is a much stronger statement:  the original conjectured lower bound is only for an estimator that returns a linear combination of i.i.d. Gaussian queries; we prove the lower bound holds for any estimator.  Our lower bound also suggests that  adaptiveness as well as nonlinearity will not help asymptotically as all these lower bounds are matched by the nonadaptive Gaussian estimator.  On the other hand, our upper bound suggests that the exact minimum variance estimator might not use i.i.d. queries.

\subsection{Problem Definitions}
\begin{definition}[estimator for the trace] \label{def:set} A trace estimator is a  randomized algorithm that, given query access to an oracle $f_A(\cdot)$ for an unknown $n\times n$ matrix $A$, makes a sequence of $k$ queries $x_1, x_2,\ldots, x_k \in \R^n$ to the oracle and receives $f_A(x_1), f_A(x_2), \ldots, f_A(x_k)$.  The output of the estimator is a real number $h(A)$ determined by the queries and the answers to the queries. %We say a trace estimator is\emph{ nonadaptive }if the distribution of $x_{i}$  is not dependent on $f_A(x_1), f_A(x_2),\ldots, f_A(x_{i-1})$.
\end{definition}

%\begin{definition}[unbiased linear nonadaptive trace estimator] We say a  trace estimator is \emph{unbiased, linear, and nonadaptive} if we sample from a distribution over $k$ queries as well as their weights: $(x_1,x_2,\ldots,x_k)$, and $(w_1,w_2,\ldots, w_k)$, and output $\sum w_i f_A(x_i)$.  In addition, 
%\[
%         \E_{w_1,w_2, \ldots, w_n,x_1,x_2,\ldots, x_n}[\sum_{i=1}^n w_i f_A(x_i)] = \trace(A)
%\]
%\end{definition}
\begin{definition}[nonadaptive linear unbiased trace estimator] We say a trace estimator is \emph{nonadaptive} if the distribution of $x_i$ is not dependent on $f_A(x_1), f_A(x_2),\ldots, f_A(x_{i-1})$. A trace estimator is \emph{linear} if we sample from a distribution over $k$ queries as well as their weights: $(x_1,x_2,\ldots,x_k)$, and $(w_1,w_2,\ldots, w_k)$, and output $\sum w_i f_A(x_i)$.  In addition, a linear trace estimator is \emph{unbiased} if
\[
         \E_{w_1,w_2, \ldots, w_n,x_1,x_2,\ldots, x_n}[\sum_{i=1}^n w_i f_A(x_i)] = \trace(A)
\]
\end{definition}

Without loss of generality, we can assume that all the queries in a linear estimator are of unit length, where the actual lengths of the queries are absorbed by the weights.

The most natural measure of quality of an estimator is its variance.
There is a large body of work on the existence of and finding a \emph{minimum variance unbiased estimator}.  Such an estimator has a strong guarantee; it is the estimator for which the variance is minimized for all possible values of the parameter to estimate.  In general, finding such an estimator is quite difficult.  It is easy to see that the variance depends on the scale of the matrix.  To normalize, we assume that the Frobenius norm of the matrix is fixed.

\begin{definition}
\label{def:var-estimator}
We define the variance of a trace estimator as the worst case of variance over all matrices with Frobenius norm $1$. To be specific, given a matrix $A$\ let us define $\Var(A,h)=\Ex[(h(A)-\trace(A))^2]$, then

\[
\Var(h)=\sup_{ \|A\|_F^2 = 1 } {\Var(A,h)}.
\]

If the variance of an estimator $h$ is at most $\delta$, we say that $h$ is a $\delta$-variance estimator.
\end{definition}

Given an unbiased estimator class, the \emph{minimum variance unbiased estimator} has the minimum variance among all the (unbiased) estimators in the class.

Another natural accuracy guarantee for a trace estimator is the notion of $(\eps,\delta)$-estimator that is introduced in~\cite{AT11}.

\begin{definition}[$(\eps,\delta)$-estimator]
\label{def:eps-delta-estimator}
A trace estimator $h$ is said to be an $(\epsilon,\delta)$-estimator of the trace if, for every matrix $A$, we have that $ |\trace(A)-  h(A)| \leq \eps \cdot \trace(A)$ with probability at least $1-\delta$.
\end{definition}  

We stress that both Definitions~\ref{def:var-estimator} and~\ref{def:eps-delta-estimator} involve worst case estimates over the choice of the matrix, and the randomness only comes from the internal randomness of the estimator.

%\begin{definition}[random Gaussian matrix and random orthogonal matrix]\label{def:transformation} 
%\begin{itemize}
%\item We call a vector $g\in \R^n$ a random Gaussian vector if each coordinate is sampled independently from $N(0,1)$.
%\item We call an $n\times n$  matrix $G$ a random Gaussian matrix if its entries are sampled independently from $N(0,1)$.
%\item We call an $n\times n$ matrix $U$ a random orthogonal matrix if it is drawn from the distribution whose probability measure is the Haar measure on the group of orthogonal matrices; specifically, it is the unique probability measure that is invariant under orthogonal transformations.
%
%%\item We call an $n\times n$ matrix $U$ a random orthogonal matrix  if its row vectors are randomly chosen unit vectors conditioned on these vectors being orthogonal.
%\item We call $k$ vectors $x_1, x_2,\ldots, x_k\in \R^n$  $k$  random orthogonal unit
%vectors if they are chosen as $k$ row vectors of a random orthogonal matrix.
%\end{itemize}
%\end{definition}
%
%We note that one way to generate a random orthogonal matrix is to generate a random Gaussian matrix and perform Gram-Schmidt orthonormalization on its rows.

\subsection{Main Results}
Our main results are as follows:

\begin{theorem}\label{thm:main1} Among all linear nonadaptive unbiased trace estimators, the minimum variance unbiased estimator that makes $k$ queries is achieved by sampling $k$ random orthogonal unit vectors  (see Definition~\ref{def:transformation}) $x_1,x_2,\ldots, x_k$ and outputting $\frac{n}{k}\sum_i f_A(x_i)$.
\end{theorem}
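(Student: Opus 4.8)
The plan is to put an arbitrary estimator into a rotation-invariant normal form, reduce its worst-case variance to a two-parameter quantity, and then use the fact that a sum of $k$ rank-one matrices has rank at most $k$.

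\emph{Reformulation.} We may assume all queries are unit vectors. Writing the output as $\sum_i w_i x_i^\top A x_i = \langle A, W\rangle_F$, where $W := \sum_i w_i x_i x_i^\top$ and $\langle\cdot,\cdot\rangle_F$ is the entrywise inner product (so $\trace(A) = \langle A, I\rangle_F$), set $M := W - I$, a random symmetric matrix. The estimator is unbiased iff $\E[M] = 0$, and $\Var(A,h) = \E[\langle A, M\rangle_F^2]$ for every $A$; hence $\Var(h)$ is the largest eigenvalue of the covariance operator $\Sigma := \E[\mathrm{vec}(M)\,\mathrm{vec}(M)^\top]$, and it suffices to consider symmetric $A$ (since $\langle A,M\rangle_F = \langle\tfrac12(A+A^\top),M\rangle_F$ and $\|\tfrac12(A+A^\top)\|_F \le \|A\|_F$). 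Taking traces in $\E[W] = I$ gives the scalar identity $\E[\sum_i w_i] = n$.

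\emph{Rotational symmetrization.} Given any linear nonadaptive unbiased estimator, form a new one that first draws a Haar-random orthogonal $Q$ (independent of everything else) and uses queries $Qx_1,\dots,Qx_k$ with the same weights. This is again linear, nonadaptive and unbiased (because $\E_Q[Q\,\E[W]\,Q^\top] = \E_Q[QIQ^\top] = I$), and its worst-case variance does not increase, since $\langle A, QMQ^\top\rangle_F = \langle Q^\top A Q, M\rangle_F$ and $\|Q^\top A Q\|_F = \|A\|_F$. So we may assume the law of $M$ is invariant under $M \mapsto RMR^\top$ for every fixed orthogonal $R$. For such an estimator $\Var(A,h)$ depends only on the eigenvalues $\lambda_1,\dots,\lambda_n$ of the symmetric part of $A$ and is a symmetric quadratic form in them, so on the unit sphere $\sum_i\lambda_i^2 = 1$ it equals $\beta + \tfrac{\alpha-\beta}{n}\big(\sum_i\lambda_i\big)^2$, where $\alpha := \tfrac1n\E[\trace(M)^2] \ge 0$ is the variance at $A = I/\sqrt n$ and $\beta \ge 0$ is the (common) variance at any traceless unit matrix. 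As $\big(\sum_i\lambda_i\big)^2$ ranges over $[0,n]$, this gives $\Var(h) = \max(\alpha,\beta)$.

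\emph{The bound and the matching estimator.} Expanding $M$ in an orthonormal basis of symmetric matrices consisting of $I/\sqrt n$ together with traceless ones gives $\E[\|M\|_F^2] = \alpha + \beta d$ with $d = \tfrac{n(n+1)}2 - 1$. On the other hand $\|M\|_F^2 = \|W\|_F^2 - 2\sum_i w_i + n$, so $\E[\|M\|_F^2] = \E[\|W\|_F^2] - n$. Since $W$ is a sum of $k$ rank-one matrices, it has rank at most $k$, whence $\|W\|_F^2 \ge \tfrac1k\trace(W)^2 = \tfrac1k\big(\sum_i w_i\big)^2$; taking expectations and using $\E\big[\big(\sum_i w_i\big)^2\big] = n\alpha + n^2$ (from $\E\sum_i w_i = n$ and the definition of $\alpha$), we obtain $\beta d \ge \tfrac{n\alpha + n^2}{k} - n - \alpha \ge \tfrac{n^2}{k} - n$ for $k \le n$ (the $\alpha$-terms have the favorable sign). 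Thus every linear nonadaptive unbiased $k$-query estimator satisfies $\Var(h) \ge \beta \ge \big(\tfrac{n^2}{k} - n\big)\big/\big(\tfrac{n(n+1)}2 - 1\big)$. Conversely, the estimator sampling $k$ uniformly random orthonormal vectors and outputting $\tfrac nk\sum_i f_A(x_i)$ has $\sum_i w_i \equiv n$, so $\alpha = 0$, and by orthonormality $\|M\|_F^2 \equiv \tfrac{n^2}{k} - n$ deterministically, so $\beta d = \tfrac{n^2}{k} - n$; its variance equals the lower bound, so it is a minimum-variance estimator. I expect the main obstacle to be the symmetrization step---verifying that all three properties survive while the worst-case variance does not grow, and that rotation-invariance collapses the extremal matrix to one of just two shapes (a multiple of $I$, or traceless). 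The rank-at-most-$k$ inequality is the one genuinely new ingredient, and it is what forces the $1/k$ dependence and singles out equal-weight orthonormal queries as extremal.
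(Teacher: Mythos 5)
Your proof is correct, but it takes a genuinely different route from the paper's. Both start with the same symmetrization idea (replace queries by $Qx_1,\dots,Qx_k$ for Haar-random $Q$, note that worst-case variance cannot increase). From there the paper parametrizes the rotation-invariant query distribution by the pairwise-angle configuration $\theta$ together with weights $w^\theta$, and proves two pointwise lemmas: first (Lemma~\ref{lem:ort}) that replacing the queries by random \emph{orthogonal} vectors while keeping the weights never increases variance, via an explicit decomposition of $\E[(x_i^T A x_i)(x_j^T A x_j)]$ through $x_j = y_i\cos\theta_{ij}+y_j\sin\theta_{ij}$; then (Lemma~\ref{lem:equal-weights}) that for orthogonal queries the weights should all equal $n/k$, via a Cauchy--Schwarz argument on $\E[(\sum w_i)^2]$. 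You instead encode the estimator in the random frame matrix $W=\sum_i w_i x_ix_i^T$ and work with $M=W-I$. Rotation invariance collapses the worst-case variance to $\max(\alpha,\beta)$ for two nonnegative scalars (the variance on $I/\sqrt n$ and on any traceless unit matrix), the basis expansion gives $\E\|M\|_F^2=\alpha+\beta d$ with $d=\tfrac{n(n+1)}2-1$, and the single new ingredient---$\mathrm{rank}(W)\le k$, hence $\|W\|_F^2\ge\trace(W)^2/k$---delivers the lower bound $\beta\ge(\tfrac{n^2}k-n)/d$, achieved with equality (and $\alpha=0$) by the equal-weight orthonormal estimator. Your route is arguably tidier: it yields a closed-form expression $(\tfrac{n^2}k-n)\big/\big(\tfrac{n(n+1)}2-1\big)$ for the optimal variance, it isolates cleanly where the $1/k$ dependence comes from (the rank bound), and it works purely at the level of the worst-case functional rather than requiring the pointwise-in-$A$ domination the paper establishes. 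The paper's route, on the other hand, gives the stronger pointwise statement that the orthogonal equal-weight estimator dominates every other configuration-plus-weights estimator for each fixed $A$, not just in the worst case.
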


\begin{theorem}\label{thm:main2} Any trace estimator with variance $\eps$ requires $\Omega(1/\eps)$ queries.  
\end{theorem}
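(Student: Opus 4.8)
The plan is to exhibit a single distribution $\mathcal D$ over $n\times n$ matrices of Frobenius norm $1$ for which the Bayes‑optimal mean squared error of any $k$‑query estimator is $\Omega(1/k)$; since the worst‑case variance of an estimator is at least its average error over $\mathcal D$, this yields the theorem in the regime $n=\Omega(1/\eps)$, which is the only interesting one (an $n\times n$ matrix is reconstructed exactly by $O(n^2)$ queries, so for tiny matrices the statement is handled separately or vacuously). Construct $\mathcal D$ as follows: draw $\sigma\in\{-1,1\}^n$ uniformly, draw a Haar‑random orthogonal $U$ independently, and set $A=\tfrac1{\sqrt n}\,U\,\diag(\sigma)\,U^\top$. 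Then $\|A\|_F=1$ deterministically while $\trace(A)=T:=\tfrac1{\sqrt n}\sum_i\sigma_i$ is a centered sum of $n$ independent signs, hence approximately $N(0,1)$; in particular $\Var(T)=1$ and $T$ has a bounded density on $[-1,1]$. The random conjugation is essential: without it, a single query along the all‑ones unit vector returns $T/n$ exactly, whereas the rotation hides this ``magic query'' and forces every query to be generic relative to the eigenbasis of $\diag(\sigma)$.

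Fixing the internal coins (the bound holds conditionally on them, hence on average), let $\tau_i$ denote the transcript after $i$ queries. Since the output is a function of $\tau_k$, the estimator's mean squared error is at least $\E[\Var(T\mid\tau_k)]$, so it suffices to show $\E[\Var(T\mid\tau_k)]=\Omega(1/k)$. This will follow from a per‑query information bound: for any adaptive rule choosing the $i$‑th query,
\[
\E\!\left[\frac{1}{\Var(T\mid\tau_i)}\right]\le \E\!\left[\frac{1}{\Var(T\mid\tau_{i-1})}\right] + C
\]
for an absolute constant $C$, provided $n$ is large; iterating from $1/\Var(T)=1$ gives $\E[1/\Var(T\mid\tau_k)]\le 1+Ck$, and Jensen's inequality gives $\E[\Var(T\mid\tau_k)]\ge 1/(1+Ck)=\Omega(1/k)$. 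The content of the per‑query bound is that a quadratic query carries only $O(1)$ Fisher information about $T$: conditioning on $\tau_{i-1}$ and on the chosen unit vector $x$, and writing $g\sim N(0,I_n)$ with $y=U^\top x\stackrel{d}{=}g/\|g\|$, the answer equals $\tfrac1{\sqrt n}\sum_j\sigma_j y_j^2=\tfrac{1}{\sqrt n\,\|g\|^2}\bigl(\sum_{j:\sigma_j=1}g_j^2-\sum_{j:\sigma_j=-1}g_j^2\bigr)$; since $\|g\|^2$ concentrates at $n$ and the numerator equals $\sqrt n\,T$ plus a mean‑zero fluctuation of standard deviation $\Theta(\sqrt n)$, the answer scaled by $n$ is $T$ plus noise of variance $\Theta(1)$ that is, across queries, nearly independent — precisely the behaviour of the Gaussian estimator's per‑query contributions (variance $2\|A\|_F^2=2$).

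Making this rigorous is where the real work lies, and I expect the main obstacle to be handling adaptivity together with the non‑Gaussianity of the channel. The clean ``$T$ plus fresh Gaussian noise'' picture is only approximate: one must control the $1/\|g\|^2$ factor and the non‑Gaussian correction to the $\chi^2$‑difference fluctuation, and — more seriously — account for the fact that the query $x$ may depend on $\tau_{i-1}$, hence on $U$, so $y=U^\top x$ need not be uniform given the past. The saving grace is that $\tau_{i-1}$ consists of only $O(k)$ real numbers, so for $n\gg k$ the posterior law of $U$ remains spread out enough that $y$ cannot concentrate near an eigenvector of $\diag(\sigma)$; quantifying this — e.g.\ via a bound on the conditional fourth moment $\E[\sum_j y_j^4\mid\tau_{i-1}]$, or alternatively by a direct $\chi^2$‑divergence estimate between the conditional laws of $\tau_k$ under $T\approx t$ and $T\approx t'$ with $|t-t'|\asymp 1/\sqrt k$ inside a Le Cam two‑point argument — is the technical heart of the proof. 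The final bookkeeping is routine: $\Var(h)\ge\E_{A\sim\mathcal D}\Var(A,h)=\E[(h-T)^2]\ge\E[\Var(T\mid\tau_k)]=\Omega(1/k)$, so variance $\eps$ forces $k=\Omega(1/\eps)$.
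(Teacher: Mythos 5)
Your approach is genuinely different from the paper's, but it is also incomplete in exactly the place you flag: the per-query bound $\E[1/\Var(T\mid\tau_i)]\le\E[1/\Var(T\mid\tau_{i-1})]+C$ is asserted, not proved, and it is the entire content of the theorem. You correctly identify the two obstructions --- the posterior of $U$ given $\tau_{i-1}$ is not Haar, so $y=U^\top x$ is not a uniform unit vector, and the noise channel is not additive Gaussian --- but you do not resolve either. The inverse-conditional-variance recursion is an equality for a Gaussian prior with Gaussian additive noise, but for a non-Gaussian channel (here the noise is a scaled signed $\chi^2$-difference, reweighted by $1/\|g\|^2$) there is no off-the-shelf inequality of this form; something like a conditional van Trees / Bayesian Cram\'er--Rao argument would be needed, and one would have to check its hypotheses hold \emph{conditionally on an adaptively chosen} $\tau_{i-1}$. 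That adaptive conditioning is precisely the danger: the estimator could in principle use early answers to home in on an eigenvector of $\mathrm{diag}(\sigma)$, in which case a later query carries much more than $O(1)$ information, and your appeal to ``only $O(k)$ real numbers of transcript'' does not by itself rule this out.

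By contrast, the paper avoids this difficulty structurally rather than analytically. It works with a two-point (Le Cam) construction over rank-$2$ matrices and uses symmetrization to reduce to a testing problem, and then --- crucially --- passes to a \emph{strong-query} oracle that returns $u\cdot x$ and $v\cdot x$ directly. Under strong queries one can show (Lemma~\ref{lem:strong}) that, conditional on any transcript, the posterior on the unseen parts of $u,v$ is rotationally symmetric in $\mathrm{Span}(x_1,\dots,x_{i-1})^\perp$, so adaptivity can be assumed away and the $k$ queries taken to be random orthonormal vectors. The testing lower bound then reduces to a total-variation bound between two explicit $2k$-dimensional distributions, computed by passing to the Gaussian surrogate (via the theorem of Li et al.) and applying KL/Pinsker. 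In short: the paper buys a rigorous handling of adaptivity by strengthening the oracle and pays for it only in constants, whereas your plan keeps the weaker oracle and must control adaptivity analytically --- which remains the open step. If you want to pursue your route, the most plausible fix is to emulate the paper's trick: strengthen the oracle so that the posterior of the hidden rotation is manifestly symmetric after each query, at which point the per-query information bound becomes tractable (and, in fact, largely reduces to the nonadaptive case).
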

\begin{theorem}\label{thm:main3} Any $(\eps,\delta)$-estimator for the trace requires $\Omega(\frac{1}{\eps^2}\log (\frac{1}{\delta}))$ queries, even if the unknown matrix is known to be positive semidefinite.
\end{theorem}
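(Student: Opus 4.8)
The plan is to prove the lower bound by a reduction to a binary hypothesis test together with a divergence bound that holds against arbitrary adaptive strategies. Fix a dimension $n$ that is large but polynomially related to the target bound (it suffices that $n \geq C\eps^{-2}\log(1/\delta)$ for a suitable absolute constant $C$), set $\gamma = 3\eps$, and consider the two families of instances $H_0 : A = ww^T$ with $w \sim N(0,\tfrac1n I_n)$, and $H_1 : A = ww^T$ with $w \sim N(0,\tfrac{1+\gamma}{n}I_n)$. Every such $A$ is rank-one and positive semidefinite, so the construction respects the PSD restriction. Since $\trace(A) = \norm{w}^2$ is a scaled $\chi^2_n$ variable, with probability $1 - e^{-\Omega(n\eps^2)}$ we have $\trace(A) \in [1-\tfrac{\eps}{10}, 1+\tfrac\eps{10}]$ under $H_0$ and $\trace(A) \in [(1+\gamma)(1-\tfrac\eps{10}),(1+\gamma)(1+\tfrac\eps{10})]$ under $H_1$; on these events the two intervals $\bigl((1-\eps)\trace(A),(1+\eps)\trace(A)\bigr)$ are disjoint and separated by the value $1+\tfrac32\eps$. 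Hence an $(\eps,\delta)$-estimator, whose internal randomness we first fix to a setting that is simultaneously good for both families (by averaging, some setting makes the multiplicative-error probability at most $O(\delta)$ under both families at once), yields a deterministic test --- ``output $H_1$ iff the estimate exceeds $1+\tfrac32\eps$'' --- that errs with probability at most $\delta' := O(\delta) + e^{-\Omega(n\eps^2)}$ under each hypothesis. For $\delta$ below an absolute constant and $C$ large, $\delta' \le \sqrt\delta$.

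By the data-processing inequality applied to this test, $\mathrm{KL}(P_0 \,\|\, P_1) \ge \mathrm{KL}\bigl(\mathrm{Ber}(\delta')\,\|\,\mathrm{Ber}(1-\delta')\bigr) = \Omega\bigl(\log(1/\delta)\bigr)$, where $P_b$ is the distribution of the query--answer transcript $(x_1,y_1,\dots,x_k,y_k)$ of the (now deterministic) $k$-query algorithm under $H_b$. So it suffices to prove that for every $k$-query algorithm, $\mathrm{KL}(P_0\,\|\,P_1) = O(k\eps^2)$; combining the two inequalities gives $k = \Omega(\eps^{-2}\log(1/\delta))$. The first step toward the transcript bound is to strengthen the oracle: any algorithm using answers $x^TAx = (x^Tw)^2$ can be simulated by one that queries the \emph{signed} oracle $x \mapsto x^Tw$ and squares the answers itself, so it is enough to prove the KL bound against the signed oracle, whose answers are jointly Gaussian. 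We may also assume the queries are unit vectors and, discarding queries lying in the span of earlier ones (their answers are determined and contribute nothing), linearly independent.

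The core of the argument is a chain-rule computation for $\mathrm{KL}(P_0\|P_1)$. Because the algorithm chooses $x_i$ from the history $h_{i-1} = (x_1,y_1,\dots,x_{i-1},y_{i-1})$ without any knowledge of which hypothesis is in force, the conditional law of $x_i$ given $h_{i-1}$ is identical under $P_0$ and $P_1$ and drops out, leaving $\mathrm{KL}(P_0\|P_1) = \sum_{i=1}^{k} \E_{h_{i-1}\sim P_0}\bigl[\mathrm{KL}\bigl(P_0(y_i\mid h_{i-1})\,\|\,P_1(y_i\mid h_{i-1})\bigr)\bigr]$. For the inner term, conditioning the isotropic Gaussian $w \sim N(0,\tfrac cn I_n)$ on the linear constraints $x_j^Tw = y_j$ $(j<i)$ pins the projection of $w$ onto $S_{i-1} := \span(x_1,\dots,x_{i-1})$ to a single point $u_{i-1}$ --- obtained by solving those linear equations, hence independent of $c$ --- while leaving the component of $w$ in $S_{i-1}^{\perp}$ a fresh $N(0,\tfrac cn I)$. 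Consequently, with $\beta_i = \norm{P_{S_{i-1}^{\perp}}x_i}^2 \in [0,1]$, we get $y_i \mid h_{i-1} \sim N\bigl(x_i^Tu_{i-1},\ \tfrac cn\beta_i\bigr)$ under $H_c$: the mean does not depend on $c$, and the two variances are in ratio $1+\gamma$. Therefore each inner KL equals $\tfrac12\bigl(\ln(1+\gamma) - \tfrac{\gamma}{1+\gamma}\bigr) \le \tfrac12\gamma^2$ (and is $0$ when $\beta_i = 0$), so $\mathrm{KL}(P_0\|P_1) \le \tfrac12 k\gamma^2 = O(k\eps^2)$, as needed.

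I expect the main obstacle to be the correct handling of adaptivity, and in particular the realization that the original squaring oracle is awkward: the conditional law of $(x_i^Tw)^2$ given the history is a noncentral $\chi^2$ whose parameters depend on the hypothesis through quantities that are themselves adaptively chosen, making the per-step divergence hard to bound uniformly; passing to the signed oracle is precisely what turns the posterior on $w$ into a clean isotropic Gaussian and makes the chain rule transparent. The remaining points --- fixing the estimator's randomness to one setting good for both families, discarding linearly dependent queries, absorbing the $\chi^2$ concentration slack into $\delta'$, and verifying the elementary estimate $\ln(1+\gamma)-\gamma/(1+\gamma) \le \gamma^2$ for $\gamma \le 1$ --- are routine.
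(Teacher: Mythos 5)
Your proposal is correct, and it takes a genuinely different route from the paper's. You both reduce to a two-point testing problem on rank-one PSD matrices and both pass to a ``signed'' oracle (the paper calls these strong queries) so the conditional law of the responses is Gaussian, but the similarity ends there. The paper uses unit vectors $u$ with $A = uu^T$ so that the trace is exactly $1$ or $1+3\eps$, then invokes symmetrization (their Lemma~\ref{lem:strong}) to argue that adaptive strong queries may be taken to be random orthogonal unit vectors, then approximates the first $k$ coordinates of a random unit vector by a Gaussian via an external total-variation estimate, and finally proves a bespoke lower bound (their Theorem~\ref{thm:tv-normal}) on the total variation distance between $N(0,1)^k$ and $N(0,1+\theta)^k$ by carving out a set $S$ where the likelihood ratio is controlled and using $\chi^2$ tail bounds. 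You instead take $w \sim N(0,\tfrac{c}{n}I_n)$ so the posterior of $w$ after any set of linear observations stays a (degenerate) Gaussian, pay for the now-random trace with a $\chi^2$ concentration slack absorbed into $\delta'$, and handle adaptivity by the KL chain rule directly on the transcript: the query-choice terms cancel because the algorithm sees the same history under both hypotheses, and each response term is a KL between two one-dimensional Gaussians with the same (history-determined) mean and variances in ratio $1+\gamma$, hence $O(\gamma^2)$ per query. You close with data processing against the Bernoulli output of the test, giving $\Omega(\log(1/\delta)) \le \mathrm{KL}(P_0\|P_1) \le O(k\eps^2)$. The net effect is that you avoid both the unit-vector-to-Gaussian approximation step and the custom TV-distance theorem, replacing them with two standard information-theoretic tools (chain rule, Bernoulli data processing); the paper's argument, on the other hand, stays within the symmetrization framework it develops for the other results and yields the (slightly more explicit) Theorem~\ref{thm:tv-normal} as a standalone byproduct. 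One small remark: fixing the estimator's internal randomness is not actually necessary for your argument --- the chain rule cancels the query-choice and coin-flip terms whether or not they are randomized, and data processing applies to any deterministic function of the transcript-plus-coins --- but it does no harm and keeps the exposition clean.
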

The bounds in~Theorem~\ref{thm:main2} and~\ref{thm:main3} are tight: both bounds can be asymptotically matched by the Gaussian estimator and the uniform Rademacher vector estimator.

\subsection{Proof Techniques Overview}
All of our results crucially use a powerful yet simple trick, which we call \emph{symmetrization}.  The heart of this trick lies in the fact that the trace of a matrix is unchanged under similarity transformations; $\trace(A) = \trace(U^TAU)$ for every $A$ and orthogonal $U$.  
If we have a nonadaptive estimator with query distribution $(x_1, x_2, \ldots,x_k) \sim P$ and an orthogonal matrix $U$, using the queries distributed as $(Ux_1, Ux_2, \ldots, Ux_k)$ should not be too different
in terms of worst-case behavior.  (We have to be more careful with adaptive estimators, which we discuss in Section~\ref{sec:sym}.)  Thus, applying symmetrization to a nonadaptive estimator yields a nonadaptive estimator where it draws queries as in the original estimator, but transforms the queries using a random orthogonal transformation.  This ``symmetrizes'' the estimator.  We prove that the performance of the estimator never decreases when symmetrization is applied, so we can exclusively consider symmetrized estimators.

\ignore{
The definition of symmetrization for  adaptive estimator is  more subtle (and appear in section~\ref{sec:sym}). We first prove that the symmetrization process never decreases the performance of the estimator (in terms of variance or multiplicative approximation guarantee).Therefore, without loss of generality, we can always assume that the optimal estimator is symmetrized. 

For nonadaptive estimator with query distribution $(x_1, x_2, \ldots,x_k)\sim P$, the symmetrization process simply choose a random orthogonal  matrix $U$ and randomly rotate the query to $(Ux_1, Ux_2, \ldots, Ux_k)$.  The definition of symmetrization for  adaptive estimator is  more subtle (and appear in section~\ref{sec:sym}). We first prove that the symmetrization process never decreases the performance of the estimator (in terms of variance or multiplicative approximation guarantee).Therefore, without loss of generality, we can always assume that the optimal estimator is symmetrized. 

In order to understand the power of trace estimators, one of the idea we developed  is called symmetrization. For nonadaptive estimator with query distribution $(x_1, x_2, \ldots,x_k)\sim P$, the symmetrization process simply choose a random orthogonal  matrix $U$ and randomly rotate the query to $(Ux_1, Ux_2, \ldots, Ux_k)$.  The definition of symmetrization for  adaptive estimator is  more subtle (and appear in section~\ref{sec:sym}). We first prove that the symmetrization process never decreases the performance of the estimator (in terms of variance or multiplicative approximation guarantee).Therefore, without loss of generality, we can always assume that the optimal estimator is symmetrized. 
}

In order to characterize the minimum variance linear nonadaptive unbiased estimator, we notice that after the symmetrization, the distribution over queries for any such estimator is defined by a distribution over the pairwise angles of the $k$ queries.  We then show that the queries should be taken to be orthogonal with certainty in order to minimize variance.  

As for the lower bounds for adaptive and nonlinear estimators, the symmetrization also plays an important role.  Consider the problem of  proving a query lower bound for $(\eps,\delta)$-approximation:  the most common approach of proving such a lower bound is to use Yao's minimax principle.  To apply this principle, we would need to construct two distributions of matrices such that the distributions cannot be distinguished after making a number of queries, even though the traces  of the matrices are very different in the two distributions.  There are several technical difficulties in applying the minimax principle directly here.  First of all, the query space is $\R^n$, so it is unclear whether one can assert that there exists a sufficiently generalized minimax principle to handle this case.  Second, even if one can apply a suitable version of minimax principle, we do not have general techniques of analyzing the distribution of $k$ adaptive queries, especially when the queries involve real numbers and thus the algorithm might have infinitely many branches.

\ignore{, which is characterized by a decision tree on real numbers of possibly infinite branches. (It will be much easier if we only consider lower bounds for nonadaptive query which is essentially characterized by a $k$-vector distribution).
}

We overcome the above two barriers and avoid using a minimax principle entirely by applying symmetrization.  One nice property of the symmetrization process is that a symmetrized estimator outputs the same distribution of results on all matrices with the same diagonalization. In the proof we carefully construct two distributions of matrices with the same diagonalization in each distribution, while the traces are different for different distributions.  Each distribution is simply the ``orbit'' of a single diagonal matrix $D$; the support consists of all matrices similar to $D$. Using the symmetrization, it suffices to show that we can not distinguish these two distributions of matrices by $k$ adaptive queries, as it is equivalent to distinguish two diagonal matrices for symmetrized trace estimators. The argument for achieving a lower bound for adaptive estimators is more subtle; we show that due to the structure of symmetrized estimators, we define a stronger query model such that adaptive estimators behave the same as the nonadaptive estimators while we achieve the same lower bound, even with the stronger query model.

\subsection{Organization}
%In section~\ref{sec:pre}, we define the mathematical tools that are needed in our analysis. In section~\ref{sec:sum}, we introduce the idea of symmetrization. We prove Theorem~\ref{thm:main1} in section~\ref{sec:linear}. In section~\ref{sec:varlb}, we prove Theorem~\ref{thm:main2}. In section~\ref{sec:approxlb}, we prove Theorem~\ref{thm:main3}.  
In section~\ref{sec:pre}, we define the mathematical tools that are needed in our analysis.
In section~\ref{sec:sum}, we introduce the idea of symmetrization. We prove Theorem~\ref{thm:main1} in section~\ref{sec:linear}. In section~\ref{sec:varlb}, we prove Theorem~\ref{thm:main2}. In section~\ref{sec:approxlb}, we prove Theorem~\ref{thm:main3}. 
%Due to the space constraint, proofs of Theorem~\ref{thm:main2} and Theorem~\ref{thm:main3} are omitted, and they are included in our full version.

% !TEX root =main.tex

%%%%%%%% Preliminaries Section %%%%%%%%%%
\section{Preliminaries} \label{sec:pre}
\begin{definition}[random Gaussian matrix and random orthogonal matrix]\label{def:transformation} 
\begin{itemize}
\item We call a vector $g\in \R^n$ a random Gaussian vector if each coordinate is sampled independently from $N(0,1)$.
\item We call an $n\times n$  matrix $G$ a random Gaussian matrix if its entries are sampled independently from $N(0,1)$.
\item We call an $n\times n$ matrix $U$ a random orthogonal matrix if it is drawn from the distribution whose probability measure is the Haar measure on the group of orthogonal matrices; specifically, it is the unique probability measure that is invariant under orthogonal transformations.

%\item We call an $n\times n$ matrix $U$ a random orthogonal matrix  if its row vectors are randomly chosen unit vectors conditioned on these vectors being orthogonal.
\item We call $k$ vectors $x_1, x_2,\ldots, x_k\in \R^n$  $k$  random orthogonal unit
vectors if they are chosen as $k$ row vectors of a random orthogonal matrix.
\end{itemize}
\end{definition}

We note that one way to generate a random orthogonal matrix is to generate a random Gaussian matrix and perform Gram-Schmidt orthonormalization on its rows.

%
%\begin{definition}[random Gaussian matrix and random orthogonal matrix]\label{def:transformation} 
%\begin{itemize}
%\item We call a vector $g\in \R^n$ a random Gaussian unit vector if each coordinate is sampled independently from $N(0,1)$.
%\item We call an $n\times n$  matrix $G$ a random Gaussian matrix if its entries are sampled independently from $N(0,1)$.
%\item We call an $n\times n$ matrix $U$ a random orthogonal matrix  if its row vectors are randomly chosen unit vectors conditioned on these vectors being orthogonal.
%\item We call $k$ vectors $x_1, x_2,\ldots, x_k\in \R^n$  $k$  random orthogonal unit
%vectors if they are chosen as $k$ row vectors of a random orthogonal matrix.
%\end{itemize}
%\end{definition}

\begin{definition}[total variation  distance] \label{def:vd} Let $P,Q$ be two distributions
with density functions $p,q$ over a domain $\Omega$.  The total variation distance
between $P$ and $Q$ is defined as $d_{TV}(P,Q) = \frac{1}{2}\int_{z\in \Omega}|p(z)-q(z)| \; dz$.
\end{definition}

\begin{proposition}\label{claim:vd} Suppose we are given one sample on either distribution $Q_1$ or $Q_2$ defined on the same sample space, and we are asked to distinguish which distribution the sample came from.  The success probability of any algorithm is at most $\half+\half d_{TV}(Q_1,Q_2)$.
\end{proposition}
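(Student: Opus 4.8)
\textbf{Proof proposal for Proposition~\ref{claim:vd}.}
The plan is the standard likelihood-ratio-test argument, phrased in the Bayesian setup where the unknown distribution is $Q_1$ or $Q_2$ each with probability $\tfrac12$ and ``success'' means correctly naming the distribution. First I would reduce to deterministic algorithms: any randomized algorithm is a mixture of deterministic ones, and its success probability is the corresponding average of their success probabilities, so it suffices to upper bound the success probability of an arbitrary deterministic algorithm (the best deterministic algorithm does at least as well as any randomized one). A deterministic algorithm is then completely described by the measurable set $S \subseteq \Omega$ on which it outputs ``$Q_1$'' (and it outputs ``$Q_2$'' on $\Omega \setminus S$).

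Next I would compute the success probability of the algorithm associated to $S$. Writing $p,q$ for the densities of $Q_1,Q_2$, the probability of a correct answer is
\[
\tfrac12 Q_1(S) + \tfrac12 Q_2(\Omega\setminus S) \;=\; \tfrac12 Q_1(S) + \tfrac12\bigl(1 - Q_2(S)\bigr) \;=\; \tfrac12 + \tfrac12\bigl(Q_1(S) - Q_2(S)\bigr).
\]
So it remains to show $Q_1(S) - Q_2(S) \le d_{TV}(Q_1,Q_2)$ for every measurable $S$. Here $Q_1(S) - Q_2(S) = \int_S (p-q)\,dz \le \int_{\{z : p(z) \ge q(z)\}} (p-q)\,dz$, since enlarging the region of integration to exactly where the integrand is nonnegative only increases the value. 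Finally, because $\int_\Omega (p-q)\,dz = 0$, the positive part and negative part of $p-q$ have equal integral, each equal to $\tfrac12 \int_\Omega |p-q|\,dz = d_{TV}(Q_1,Q_2)$ by Definition~\ref{def:vd}; hence $\int_{\{p\ge q\}}(p-q)\,dz = d_{TV}(Q_1,Q_2)$, which gives the bound. (This also shows the bound is tight: the test with $S = \{p \ge q\}$ achieves equality.)

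There is no real obstacle here; the argument is routine. The only points requiring a little care are (i) making explicit that ``distinguish'' refers to the uniform-prior success probability, so that the reduction to deterministic algorithms and the computation above are valid as stated, and (ii) the measure-theoretic bookkeeping in passing from an arbitrary $S$ to the likelihood-ratio region, which is handled by the elementary ``enlarge to the nonnegative set'' step together with the fact that $p-q$ integrates to zero.
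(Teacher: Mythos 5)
Your proof is correct and is the standard likelihood-ratio argument; the paper itself states Proposition~\ref{claim:vd} without proof, treating it as a known fact, so there is no in-paper proof to compare against. Your reduction to deterministic tests, the identity $\tfrac12 + \tfrac12(Q_1(S)-Q_2(S))$ for the success probability, and the bound $Q_1(S)-Q_2(S)\le d_{TV}(Q_1,Q_2)$ via the set $\{p\ge q\}$ together with $\int(p-q)=0$ are exactly the textbook route, and you correctly flag that ``success probability'' is being read with a uniform prior (the bound also holds for the worst-case reading, since the worst-case success probability is at most the average).
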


\begin{definition}[KL-divergence] \label{def:kl} Let $P,Q$ be two distributions
with density functions $p,q$ over a domain $\Omega$.  The Kullback-Leibler divergence of $Q$ from $P$ is defined as $d_{KL}(P,Q) = \int_{z\in \Omega}\ln(p(z)/q(z)) p(z)dz$.
\end{definition}

We know the following relationship between Kullback-Leibler divergence and total variation distance, which is also known as Pinsker's Inequality.

\begin{theorem} \label{thm:kl} Suppose the KL-divergence between distributions $P,Q$ is $d_{KL}(P,Q)$, and total variation distance is $d_{TV}(P,Q)$, then
\[
2d_{TV}(P,Q)^2\leq  d_{KL}(P,Q)
\]
\end{theorem}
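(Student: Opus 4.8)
This is the classical Pinsker inequality, and the plan is to reduce the general statement to the two‑point (Bernoulli) case and then verify that case by a short calculus argument. First I would exploit the fact that the total variation distance is attained on a distinguished event: setting $A = \{z \in \Omega : p(z) \ge q(z)\}$, the positive and negative parts of $p-q$ each integrate to the same quantity, so $d_{TV}(P,Q) = \frac12\int_\Omega |p-q|\,dz = P(A) - Q(A)$. Pushing $P$ and $Q$ forward under the indicator map $z \mapsto \mathbf 1_A(z)$ produces two Bernoulli distributions $\bar P = (P(A),\,1-P(A))$ and $\bar Q = (Q(A),\,1-Q(A))$ on $\{0,1\}$ with $d_{TV}(\bar P,\bar Q) = |P(A)-Q(A)| = d_{TV}(P,Q)$. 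The structural ingredient I would invoke here is the data‑processing inequality for KL‑divergence — for any (possibly randomized) map $T$, the pushforwards satisfy $d_{KL}(TP, TQ) \le d_{KL}(P,Q)$ — applied to $T = \mathbf 1_A$ to get $d_{KL}(\bar P,\bar Q) \le d_{KL}(P,Q)$. Data processing itself is nothing more than the log‑sum inequality (equivalently, Jensen applied to the convex function $t \mapsto t\ln t$): grouping the density ratio over each preimage, $\sum_i a_i \ln(a_i/b_i) \ge \big(\sum_i a_i\big)\ln\big(\sum_i a_i / \sum_i b_i\big)$, and then integrating.

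Next I would establish the Bernoulli case: for all $p,q \in [0,1]$,
\[
p\ln\frac{p}{q} + (1-p)\ln\frac{1-p}{1-q} \;\ge\; 2(p-q)^2 .
\]
Fix $q$ and let $\varphi(p)$ be the left‑hand side minus the right‑hand side. Differentiating gives $\varphi'(p) = \ln\frac{p}{q} - \ln\frac{1-p}{1-q} - 4(p-q)$, so that $\varphi(q) = 0$ and $\varphi'(q) = 0$, and
\[
\varphi''(p) = \frac{1}{p} + \frac{1}{1-p} - 4 = \frac{1}{p(1-p)} - 4 \;\ge\; 0,
\]
since $p(1-p) \le \tfrac14$. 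Hence $\varphi$ is convex on $[0,1]$ with a critical point at $p=q$, so $\varphi \ge 0$ throughout.

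Finally I would chain the two ingredients, with $p = P(A)$ and $q = Q(A)$:
\[
2\, d_{TV}(P,Q)^2 \;=\; 2\, d_{TV}(\bar P,\bar Q)^2 \;\le\; d_{KL}(\bar P,\bar Q) \;\le\; d_{KL}(P,Q).
\]
The only mildly delicate point — hence the place I would expect to spend real care — is the measure‑theoretic justification of the data‑processing step for a general domain $\Omega$ (the integral form of the log‑sum inequality and the associated bookkeeping); the Bernoulli computation is entirely routine. An alternative that sidesteps data processing is to prove the pointwise bound $p\ln\frac{p}{q} - p + q \ge \frac{3(p-q)^2}{2(2p+q)}$ and integrate, then apply Cauchy–Schwarz in the form $\big(\int|p-q|\big)^2 \le \int\frac{(p-q)^2}{2p+q}\cdot\int(2p+q) = 3\int\frac{(p-q)^2}{2p+q}$ to recover $d_{KL}(P,Q) \ge 2\,d_{TV}(P,Q)^2$; but the reduction to the binary case is the cleaner route.
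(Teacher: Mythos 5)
The paper does not prove this statement at all: it is stated as the classical Pinsker inequality and cited as a known fact, with no proof supplied. Your proposal therefore cannot be compared to ``the paper's own proof,'' but it is a correct and complete self-contained proof, following the standard route: reduce to the two-point case via the maximizing event $A=\{p\ge q\}$ and the data-processing inequality for KL-divergence, then verify the Bernoulli inequality $p\ln\tfrac{p}{q}+(1-p)\ln\tfrac{1-p}{1-q}\ge 2(p-q)^2$ by the second-derivative argument (convexity plus vanishing value and first derivative at $p=q$). The calculus checks out, the data-processing step is correctly identified as an instance of the log-sum inequality, and the alternative Cauchy--Schwarz route you sketch is also valid. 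The only caveat worth flagging is the one you already raise yourself: in full generality one should be explicit about the measurability of $A$ and the integral form of the log-sum inequality, but this is routine. In short, you have supplied a correct proof of a lemma the paper simply took as given.
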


To bound the total variation distance of the distributions we consider, we compute the KL-divergence and apply Pinsker's Inequality.  For example, the KL-divergence between two multivariate Gaussian distributions is well known; we will only use the following special case for multivariate Gaussian distributions with mean $\vec{0}$ and covariance matrices $\Sigma_0$ and $\Sigma_1$.

\begin{theorem}\label{thm:klg} Given two $n$-dimensional Gaussian distributions $N_0 = N(\vec{0}, \Sigma_0)$ and  $N_1 = N(\vec{0}, \Sigma_1)$ we have that

\[
d_{KL}({N}_0 ,{N}_1) = { 1 \over 2 } \left( \trace \left(\Sigma_1^{-1} \Sigma_0 \right)  - n -\ln { \det(  \Sigma_0 ) \over \det( \Sigma_1 ) } \right)
\]

\end{theorem}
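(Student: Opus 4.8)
The plan is to compute the KL-divergence directly from its definition, since every quantity involved is an explicit Gaussian integral. Assume (as the formula implicitly requires) that $\Sigma_0$ and $\Sigma_1$ are positive definite, and write the density of $N_i = N(\vec{0},\Sigma_i)$ as $p_i(z) = (2\pi)^{-n/2}\det(\Sigma_i)^{-1/2}\exp(-\tfrac12 z^T\Sigma_i^{-1}z)$. By Definition~\ref{def:kl}, $d_{KL}(N_0,N_1) = \E_{z\sim N_0}[\ln p_0(z) - \ln p_1(z)]$. Taking logarithms turns the product/exponential structure into a sum: the $(2\pi)^{-n/2}$ factors cancel, the determinant prefactors contribute the deterministic term $\tfrac12\ln(\det(\Sigma_1)/\det(\Sigma_0))$, and the exponents contribute $\tfrac12\,\E_{z\sim N_0}[\,z^T\Sigma_1^{-1}z - z^T\Sigma_0^{-1}z\,]$.

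The only step beyond bookkeeping is the standard quadratic-form identity: for a symmetric matrix $M$ and $z\sim N(\vec{0},\Sigma_0)$, $\E[z^TMz] = \E[\trace(Mzz^T)] = \trace(M\,\E[zz^T]) = \trace(M\Sigma_0)$, using linearity of trace and expectation and $\E[zz^T]=\Sigma_0$. Applying it with $M=\Sigma_1^{-1}$ gives $\trace(\Sigma_1^{-1}\Sigma_0)$, and with $M=\Sigma_0^{-1}$ gives $\trace(\Sigma_0^{-1}\Sigma_0)=\trace(I_n)=n$. Substituting back, $d_{KL}(N_0,N_1) = \tfrac12(\trace(\Sigma_1^{-1}\Sigma_0) - n + \ln(\det(\Sigma_1)/\det(\Sigma_0)))$, which is exactly the claimed expression once one rewrites $\ln(\det(\Sigma_1)/\det(\Sigma_0)) = -\ln(\det(\Sigma_0)/\det(\Sigma_1))$.

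There is essentially no obstacle: the integrand is a polynomial times a Gaussian, hence absolutely integrable, so all interchanges of expectation and trace are justified, and the computation is routine. If one wished to avoid even the quadratic-form identity, an alternative is to first reduce to $\Sigma_0 = I_n$ via the change of variables $z\mapsto \Sigma_0^{-1/2}z$ — KL-divergence is invariant under a common invertible linear map, $N(\vec0,\Sigma_1)$ becomes $N(\vec0,\Sigma_0^{-1/2}\Sigma_1\Sigma_0^{-1/2})$, and both $\trace(\Sigma_1^{-1}\Sigma_0)$ and $\det(\Sigma_0)/\det(\Sigma_1)$ transform consistently by cyclicity of trace and multiplicativity of $\det$ — and then diagonalize to reduce to a product of one-dimensional Gaussians. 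The direct route above is shorter, so that is what I would present.
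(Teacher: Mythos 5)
Your derivation is correct: the direct computation from Definition~\ref{def:kl}, using the quadratic-form identity $\E_{z\sim N_0}[z^TMz]=\trace(M\Sigma_0)$, is the standard proof of this formula. The paper itself states Theorem~\ref{thm:klg} without proof, citing it as a well-known fact about Gaussians, so there is no in-paper argument to compare against; your write-up supplies exactly the routine verification the authors took for granted, and the sign bookkeeping at the end (rewriting $\ln(\det\Sigma_1/\det\Sigma_0)=-\ln(\det\Sigma_0/\det\Sigma_1)$) matches the stated form.
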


\ignore{
We will also need to bound the total variation distance between random orthogonal vectors and a multivariate Gaussian distribution.  We will apply two such results.

\begin{theorem}~\cite{Kho06} Given a random unit vector $v$ and a random unit Gaussian vector $g$, let us use $v_{[k]}$ and $g_{[k]}$ to denote the first $k$ coordinates of $v$ and $g$.
We have
\[
        d_{TV}(g_{[k]},v_{[k]}) = O(k/n)
\]
\end{theorem}

\begin{theorem} ~\cite{LNW14}\label{thm:LNW} Given a $n$ dimensional random orthogonal matrix $U$ as well as a random Gaussian matrix, if we take a $r \times k$ submatrix $U_{r,k}$ of $U$ and a $r \times k$ submatrix $G_{r,k}$ of $G$. We have that
\[
        d_{TV}(G_{r,k},U_{r,k})=o(1)
\]
when $r k\leq n^{1-\Omega(1)}$.

\end{theorem}
}

\section{Symmetrization of an estimator}\label{sec:sum}\label{sec:sym}
\ignore{
In this section, we introduce the idea of symmetrization of an estimator which is a crucial element of all our remaining proofs.

\begin{definition}[Rotation of an estimator]
\label{def:rotation-estimator}
For any $n\times n$ orthogonal matrix $U$ (i.e., $U^T U=I$) and estimator $h$, let us define $h^U$ as the estimator that, on input $A$, makes the queries that $h$ makes and receives responses that $h$ receives on input $U^TAU$ up to a transformation $U$; i.e., assuming $h$ queries $x_1,x_2, \ldots, x_k$ on $A'=U^T A U$, $h^U$ will query $U x_1, Ux_2, \ldots, Ux_k$ on $A$.
\end{definition}
}

In this section, we introduce the idea of symmetrization of an estimator which is a crucial element of all our remaining proofs.  We first define the \emph{rotation} of an estimator, which we will denote $h^U$ for an $n \times n$ orthogonal matrix $U$.  
Intuitively, the construction of $h^U$ is such that $h^U$ emulates the behavior of $h$ on a rotated version of the matrix $A$.  More specifically, $h^U$ makes queries in the following way:

\begin{itemize}
\item Letting $q_1$ be a random variable whose distribution is the same as the first query of $h$, the distribution of the first query of $h^U$ is the same as the random variable $Uq_1$.

\item Given queries $Uq_1,Uq_2,\ldots,Uq_{j-1}$ made by $h^U$ so far with responses $t_1,t_2,\ldots,t_{j-1}$, the distribution of the $j$th query of $h^U$ has the same distribution as $Uq_j$, where $q_j$ is distributed the same as the $j$th query that $h$ makes, given queries $q_1,q_2,\ldots,q_{j-1}$ with responses $t_1,t_2,\ldots,t_{j-1}$. 
\end{itemize}

In the case that $h$ is a nonadaptive estimator, the queries of $h^U$ are just $Ux_1,Ux_2, \ldots, Ux_k$, where $x_1,x_2,\ldots,x_k$ is a set of queries from the distribution of queries that $h$ makes.

\begin{lemma}\label{lem:rotation}For any estimator $h$ and orthogonal matrix $U$,
\begin{itemize}
\item $\Var(h^U)=\Var(h)$.
\item  $h$ is an $(\eps,\delta)$-approximation estimator if and only if $h^{U}$ is also an $(\epsilon,\delta)$-estimator.
\end{itemize}
\end{lemma}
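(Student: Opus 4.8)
The plan is to show that the estimator $h^U$ run on input $A$ produces exactly the same distribution of outputs as $h$ run on input $U^T A U$. Since $\trace(A) = \trace(U^T A U)$ for orthogonal $U$, and since $\|A\|_F = \|U^T A U\|_F$, this distributional identity immediately transfers every performance guarantee — variance and $(\eps,\delta)$-approximation alike — from $h$ to $h^U$, after which the lemma follows by taking suprema over the appropriate matrix classes.

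First I would establish the key distributional claim by induction on the number of queries $j$. The base case: the first query of $h^U$ on $A$ is $Uq_1$ where $q_1$ is the first query of $h$; the response $h^U$ receives is $f_A(Uq_1) = (Uq_1)^T A (Uq_1) = q_1^T (U^T A U) q_1 = f_{U^TAU}(q_1)$, which is exactly the response $h$ would receive on input $U^TAU$ to its first query $q_1$. For the inductive step, suppose the first $j-1$ query–response pairs of $h^U$ on $A$ are $(Uq_1,t_1),\ldots,(Uq_{j-1},t_{j-1})$, and that these $t_i$ equal $f_{U^TAU}(q_i)$, so the conditional distribution of $q_j$ used by $h^U$ matches the conditional distribution of $h$'s $j$th query given history $(q_1,t_1),\ldots,(q_{j-1},t_{j-1})$ on input $U^TAU$. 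Then $h^U$'s $j$th query is $Uq_j$ and its response is $f_A(Uq_j) = q_j^T(U^TAU)q_j = f_{U^TAU}(q_j)$, closing the induction. Consequently the full transcript $(q_1,t_1,\ldots,q_k,t_k)$ seen (internally) by $h^U$ on $A$ is distributed identically to the transcript seen by $h$ on $U^TAU$, so the output distribution $h^U(A)$ equals the output distribution $h(U^TAU)$ (the output being a fixed, possibly randomized, function of the transcript in both cases).

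Given this, the two bullets follow quickly. For variance: $\Var(A,h^U) = \Ex[(h^U(A) - \trace(A))^2] = \Ex[(h(U^TAU) - \trace(U^TAU))^2] = \Var(U^TAU,h)$; taking the supremum over $\|A\|_F^2 = 1$ and using that $A \mapsto U^TAU$ is a bijection on this set gives $\Var(h^U) = \Var(h)$. For the $(\eps,\delta)$-guarantee: $\Pr[|h^U(A) - \trace(A)| \le \eps\,\trace(A)] = \Pr[|h(U^TAU) - \trace(U^TAU)| \le \eps\,\trace(U^TAU)]$, and again ranging over all $A$ (equivalently all $U^TAU$) shows $h$ is an $(\eps,\delta)$-estimator iff $h^U$ is. I do not anticipate a serious obstacle here; the only point requiring care is the bookkeeping in the inductive step — making sure the conditioning histories of $h^U$ on $A$ and of $h$ on $U^TAU$ are genuinely coupled so that the conditional query distributions line up — but this is exactly how $h^U$ was defined, so it is a matter of writing the coupling cleanly rather than a genuine difficulty.
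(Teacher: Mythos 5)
Your proposal is correct and takes essentially the same route as the paper: both rest on the observation that $h^U$ on $A$ has the same output distribution as $h$ on $U^TAU$, combined with invariance of trace and Frobenius norm under conjugation. The paper simply asserts this distributional identity, whereas you spell out the transcript-coupling induction that justifies it; that extra detail is welcome but does not change the argument.
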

\begin{proof}We know that given a matrix $A$, the behavior of $h^U$ is the same as $h$
on estimating $U^TAU$. On the other hand, we know that $\trace(U^T
A U)=\trace(A)$ and $\|A\|_F=\|\UA\|_F$. Therefore, the variance of $h^U$ on
$A$ is the same as the variance of $h$ on $\UA$.  Now suppose $h$ is an $(\eps,\delta)$-estimator.  We know that the approximation guarantee of $h^U$ on $A$ is the same as $h$ on $\UA$. Therefore, we know that with probability at least $(1-\delta)$, the estimator $h^U$'s output is within
\[
\left((1-\eps)\trace(\UA), (1+\eps)\trace(\UA)\right)=\left((1-\eps)\trace(A), (1+\eps)\trace(A)\right).
\]

\end{proof}

\begin{definition}[averaging estimators over a distribution]Suppose we have a collection of estimators $H$, for any probability distribution $P$ on $H$, we define $h_{H,P}$ as the following estimator:
\begin{enumerate}
\item Randomly sample an estimator $h\sim P$.
\item Output according to the  estimation of $h$.
\end{enumerate}
\end{definition}
\begin{lemma}
\label{lem:averaging}

Averaging a collection of estimators cannot increase variance or weaken an $(\eps,\delta)$-guarantee.  Specifically:

\begin{itemize}

\item If all the estimators $H$ are unbiased and have variance at most  $c$, then $h_{H,P}$'s variance is also at most $c$.

\item If all the  estimators in $H$ are $(\eps,\delta)$-estimators, then $h_{H,P}$ is also an $(\eps,\delta)$-estimator.
\end{itemize}
\end{lemma}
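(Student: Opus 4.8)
The plan is to prove both bullets by the same one-line device: condition on the random draw $h \sim P$ and apply the tower property of expectation, so that every quantity controlling $h_{H,P}$ on a fixed matrix $A$ is exactly the $P$-average of the corresponding quantity for the individual estimators in $H$.

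\textbf{Variance.} First I would check that averaging preserves unbiasedness: for any $A$, $\Ex[h_{H,P}(A)] = \Ex_{h\sim P}[\Ex[h(A)\mid h]] = \Ex_{h\sim P}[\trace(A)] = \trace(A)$. Next, fixing $A$ with $\|A\|_F^2 = 1$ and writing $\mu = \trace(A)$, the tower property gives
\[
\Var(A,h_{H,P}) = \Ex\big[(h_{H,P}(A)-\mu)^2\big] = \Ex_{h\sim P}\big[\Ex[(h(A)-\mu)^2 \mid h]\big] = \Ex_{h\sim P}[\Var(A,h)].
\]
Since each $\Var(A,h) \le \Var(h) \le c$ by hypothesis, the right-hand side is at most $c$; taking the supremum over all $A$ with $\|A\|_F^2 = 1$ then yields $\Var(h_{H,P}) \le c$.

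\textbf{$(\eps,\delta)$-guarantee.} For a fixed matrix $A$, let $E_h$ denote the good event $|h(A)-\trace(A)| \le \eps\,\trace(A)$, which by assumption has probability at least $1-\delta$ for every $h \in H$. Conditioning on the draw of $h$,
\[
\Pr\big[\,|h_{H,P}(A)-\trace(A)| \le \eps\,\trace(A)\,\big] = \Ex_{h\sim P}\big[\Pr[E_h \mid h]\big] \ge \Ex_{h\sim P}[1-\delta] = 1-\delta,
\]
and since $A$ was arbitrary, $h_{H,P}$ is an $(\eps,\delta)$-estimator.

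I do not expect any genuine obstacle here; the only point that deserves a word of care is measurability — to form the outer expectations $\Ex_{h\sim P}[\cdot]$ one must know that $h \mapsto \Var(A,h)$ and $h \mapsto \Pr[E_h]$ are measurable functions of $h$. In every application we make of this lemma, $H$ is a well-behaved parametrized family (the rotations $h^U$ of a fixed estimator, indexed by the orthogonal group equipped with its Haar measure), so this holds trivially, and I would simply note it rather than belabor it.
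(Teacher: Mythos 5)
Your proof is correct and follows essentially the same route as the paper: condition on the draw $h\sim P$ and apply the tower property. The paper phrases the variance step as the law of total variance, $\Var[h_{H,P}] = \Ex_{h\sim P}[\Var[h]] + \Var_{h\sim P}[\Ex[h]]$, and drops the second term by unbiasedness; you instead center directly at $\mu=\trace(A)$ and observe the conditional second moment equals $\Var(A,h)$ — these are the same computation. (In fact, with the paper's Definition~\ref{def:var-estimator}, $\Var(A,h)$ is already the second moment about $\trace(A)$, so the identity $\Var(A,h_{H,P}) = \Ex_{h\sim P}[\Var(A,h)]$ holds by the tower property alone; unbiasedness is only needed so that the pointwise bound $\Var(A,h)\le c$ is the hypothesis you actually have.) Your careful fixing of $A$ and taking the supremum at the end, and your remark on measurability, are both reasonable points the paper leaves implicit.
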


\begin{proof}
For the first, we apply the law of total variance conditioned on the draw of $h \sim P$:

\[
\Var[h_{H,P}] = \E_{h \sim P}[\Var[h]] + \mathop\Var_{h \sim P}[\E[h]]
\]

The second term above is $0$, since all estimators in $H$ are unbiased.
Since $\Var[h] \leq c$ for every $h \in H$, $\E_{h \sim P}[\Var[h]] \leq c$ as well.

For the second claim, assuming that 

\[
\Pr[h(A) \in \left((1 - \eps)\trace(A),(1 + \eps)\trace(A)\right) ] \geq 1 - \delta
\]

for each $h \in H$, we have

\[
\Pr[h_{H,P}(A) \in \left((1 - \eps)\trace(A),(1 + \eps)\trace(A)\right) ] \geq
\]
\[ 
\inf_{h \in H} \Pr[h(A) \in \left((1 - \eps)\trace(A),(1 + \eps)\trace(A)\right) ] \geq 1 - \delta.
\]

\end{proof}

\begin{definition}[Symmetrization of a trace estimator]Given an estimator $h$, we define the symmetrization $h^{sym}$ of $h$ to be the estimator where we
\begin{enumerate}
\item sample a random orthogonal matrix $U$ (see definition~\ref{def:transformation}), and
\item use $h^U$ to estimate the trace.
\end{enumerate}
We say an estimator is \emph{symmetric} if it is equivalent to the symmetrization of some estimator.
\end{definition}
By Lemma~\ref{lem:rotation} and Lemma~\ref{lem:averaging}, we know that $h^{sym}$'s variance as well as its $(\eps,\delta)$-approximation
is always no worse than $h$. Therefore, without loss of generality, we can always assume that the optimal estimator is symmetric.

One nice property of the symmetric estimator is that it has the same performance on all matrices with the same diagonalization.
\begin{lemma}
Given a symmetrized estimator $h^{sym}$, its variance and approximation guarantee is the same for any matrix $A$ and  $U^TA U$ for any orthogonal matrix $U$.
\end{lemma}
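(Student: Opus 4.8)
The plan is to show that a symmetrized estimator $h^{sym}$ produces exactly the same output distribution on $A$ as it does on $U^T A U$, for every orthogonal $U$; the claims about variance and the $(\eps,\delta)$-guarantee then follow immediately, since both quantities are determined by the output distribution together with $\trace(A)$, and $\trace(A) = \trace(U^T A U)$.

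First I would fix an orthogonal matrix $V$ and compare $h^{sym}$ run on $A$ with $h^{sym}$ run on $V^T A V$. By construction, $h^{sym}$ first samples a Haar-random orthogonal $U$ and then runs $h^U$; by Definition of the rotation $h^U$, the behavior of $h^U$ on a matrix $B$ is exactly the behavior of $h$ on $U^T B U$ with the queries transformed by $U$ (in particular the output, which is what matters here, is the output of $h$ on $U^T B U$). So running $h^{sym}$ on $A$ gives the output of $h$ on $U^T A U$ with $U$ Haar-random, while running $h^{sym}$ on $V^T A V$ gives the output of $h$ on $U^T V^T A V U = (VU)^T A (VU)$ with $U$ Haar-random.

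The key step is then the invariance of the Haar measure: since $U$ is Haar-distributed and $V$ is a fixed orthogonal matrix, $VU$ has exactly the same distribution as $U$. Hence the random matrix $(VU)^T A (VU)$ that $h$ is effectively run on when estimating $V^T A V$ has the same distribution as the random matrix $U^T A U$ that $h$ is run on when estimating $A$, so the two output distributions coincide. Combining this with $\trace(V^T A V) = \trace(A)$ shows $\Var(V^T A V, h^{sym}) = \Var(A, h^{sym})$, and likewise the event that the output lies in $\left((1-\eps)\trace(\cdot),(1+\eps)\trace(\cdot)\right)$ has the same probability in both cases.

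I do not expect a serious obstacle here; the one point requiring a little care is making precise that ``the output of $h^U$ on $B$ has the same distribution as the output of $h$ on $U^T B U$'' — this needs a short induction on the query index $j$ matching up the joint distribution of $(q_1,\dots,q_j,t_1,\dots,t_j)$ for $h$ on $U^T B U$ with that of $(Uq_1,\dots,Uq_j,t_1,\dots,t_j)$ for $h^U$ on $B$, using that $f_B(Uq) = (Uq)^T B (Uq) = q^T (U^T B U) q = f_{U^T B U}(q)$ so the responses match at each step. Once that is in hand (it is essentially the content already used in the proof of Lemma~\ref{lem:rotation}), the argument above is just Haar-invariance.
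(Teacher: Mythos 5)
Your proof is correct and follows essentially the same route as the paper's: express $h^{sym}$ on $B$ as $h$ run on $U^T B U$ with $U$ Haar-random, note that $UV$ (or $VU$) has the same distribution as $U$ by Haar invariance, and conclude using $\trace(V^T A V) = \trace(A)$. The only (minor) stylistic difference is that you observe once that the entire output distribution is unchanged, which gives both the variance claim and the $(\eps,\delta)$ claim simultaneously, whereas the paper carries out the same Haar-invariance calculation separately for each of the two claims.
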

\begin{proof}
Given any matrix $A$, we know that the variance of $h^{sym}$ is $\E_{U_1}[\Var(h,U_1^TAU_1)]$ and the variance of $h^{sym}$ on the matrix $\UA$ is $\Ex_{U_{1}}[\Var(h, (U_{1}U)^{T}AU_1U)]$.  We know that $U_1$ and $U_1U$ are identically distributed; therefore, $h^{sym}$ has the same estimation variance on $A$ and $\UA$.

Similarly, for the approximation guarantee, suppose $h^{sym}$ is an $(\eps,\delta)$-estimator, which means that
\[
        \E_{U_1}\left[\Pr\left(h(U_1^TAU_1)\in \left((1-\eps)\trace(A),(1+\eps)trace(A)\right)\right)\right]\geq 1-\delta.
\]
We know that for the matrix $A'=UAU$, $U_1^TA'U_1=U_1^T\UA U_1$ has the same distribution as $U_1^TA U_1$. Therefore,
\begin{multline*}
\E_{U_1}\left[\Pr\left(h(U_1^TAU_1)\in \left((1-\eps)\trace(A),(1+\eps)trace(A)\right)\right)\right]
\\=\E_{U_1}\left[\Pr\left(h(U_1^TA'U_1)\in \left((1-\eps)\trace(A'),(1+\eps)trace(A')\right)\right)\right]
\end{multline*}
\end{proof}

\section{Optimal Linear Nonadaptive Estimator}\label{sec:linear}
% !TEX root =main.tex

Without loss of generality, we can assume that the optimal estimator is symmetric.  For a symmetric nonadaptive estimator,  we can think of $x_1, x_2, \ldots, x_k$ as generated by the following process. 

%An alternative way of looking at the distribution $(x_1,x_2,\ldots,x_k)$ of unit-vector queries that $h^{sym}$ makes is to consider a distribution $P_\theta$ over the pairwise angles of the vectors $\{\theta_{ij}\}_{1\leq i <j\leq k}$,
%where without loss of generality we assume that each $\theta_{ij}$ is supported on $(-\pi/2,\pi/2]$.
%\ignore{
%We make heavy use of the following two properties of the marginal distributions:
%
%\begin{itemize}
%\item The marginal distribution on $x_i$ is the uniform distribution over unit vectors.
%\item The marginal distribution of $(x_i,x_j)$ for $i \neq j$ is the distribution where $x_i$ is a uniform random unit vector, and $x_j$ is a uniform random unit vector conditioned on the angle between $x_i$ and $x_j$ having the as the marginal distribution of $P_\theta$ on  $\theta_{ij}$.
%\end{itemize}~\knote{Prove this...?}
%
%Further, we are only considering unbiased nonadaptive linear estimators.  The weights of the estimator are allowed to be randomized, so the estimator randomly chooses $(w_1,w_2,\ldots,w_k)$ and $(x_1,x_2,\ldots,x_k)$ from some distribution and outputs $\sum_{i=1}^k w_i f_A(x_i)$ for some $w_i$'s such that $\sum_{i=1}^k w_i = n$.  Our main result in this section is the following.
%}

\begin{itemize} 
\item Sample a configuration $\theta= \{\theta_{ij}\}_{1\leq i < j \leq k}$ from some distribution $P_\Theta$ . For each configuration $\theta$, there is a corresponding weight vector $w^\theta=(w^{\theta}_1,w^{\theta}_2, \ldots, w^{\theta}_k)$.
\item Generate $x_1,x_2,\ldots, x_k$ by drawing $k$ random unit vectors conditioned on the angle between $x_i,x_j$ being $\theta_{ij}$ for all $i < j$.  (This can be done efficiently.)  
\item Output $\sum_{i=1}^k w^\theta_i  f_A(x_i)$.
\end{itemize}

The proof of Theorem~\ref{thm:main1} consists of two steps.  First we will show that we can set all of the angles (deterministically) to be $\frac{\pi}{2}$ without increasing the variance, so we can assume that the queries are orthogonal.  In the second step, we will then show that the optimal way of assigning weight is to (deterministically) set each weight to be $\frac{n}{k}$.

We first prove that we can replace the queries $x_1, x_2, \ldots, x_k$ by $k$ random orthogonal unit vectors without increasing the variance.
\begin{lemma}\label{lem:ort} Let $y_1,y_2, \ldots, y_k$ be $k$ randomly orthogonal unit vectors.  We have that 
\begin{equation}\label{eqn:ot}
        \Var\left(\sum_{i=1}^k w^\theta_i  f_A(y_i)\right)\leq \Var\left(\sum_{i=1}^k w^\theta_i  f_A(x_i)\right)
\end{equation}
\ignore{Over the class of unbiased nonadaptive linear estimators, the minimum variance is achieved by the estimator that queries a uniformly random set of $k$ mutually orthogonal unit vectors $x_1,\ldots,x_k$ and outputs $\sum_{i=1}^k \frac{n}{k} f_A(x_i)$.}
\end{lemma}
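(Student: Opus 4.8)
The plan is to expand both variances explicitly as functions of the configuration $\theta$ and the fixed weight vector $w^\theta$, and then show term-by-term that replacing the random angles by the deterministic right angles can only help. Since the estimator is unbiased, $\Var\left(\sum_i w_i f_A(x_i)\right) = \E\left[\left(\sum_i w_i f_A(x_i)\right)^2\right] - \trace(A)^2$, and the second term does not depend on the query distribution, so it suffices to compare the second moments. Expanding, $\E\left[\left(\sum_i w_i f_A(x_i)\right)^2\right] = \sum_{i,j} w_i w_j \,\E[f_A(x_i) f_A(x_j)] = \sum_{i,j} w_i w_j \,\E[(x_i^T A x_i)(x_j^T A x_j)]$. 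The key point is that for a symmetric estimator, each $x_i$ is individually a uniformly random unit vector, and the joint law of the pair $(x_i, x_j)$ depends only on the angle $\theta_{ij}$ between them (and for $i = j$ the ``pair'' is degenerate). So I would first compute, for two unit vectors $u, v$ at a fixed angle $\phi$, the quantity $E(\phi) := \E[(u^T A u)(v^T A v)]$ where the expectation is over a uniformly random such pair, as an explicit function of $\phi$ and invariants of $A$.

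**The pair computation.** By rotational invariance I can average over a random orthogonal $U$ and assume $A$ is diagonal, $A = \diag(\lambda_1, \ldots, \lambda_n)$; then $u^T A u = \sum_\ell \lambda_\ell u_\ell^2$. So $E(\phi) = \sum_{\ell, m} \lambda_\ell \lambda_m \,\E[u_\ell^2 v_m^2]$. The moments $\E[u_\ell^2 v_m^2]$ for a random pair of unit vectors at angle $\phi$ can be gotten by writing $v = \cos\phi \, u + \sin\phi \, u^\perp$ with $u^\perp$ a random unit vector orthogonal to $u$, and using known formulas for moments of uniform unit vectors (e.g.\ $\E[u_\ell^2] = 1/n$, $\E[u_\ell^4] = 3/(n(n+2))$, $\E[u_\ell^2 u_m^2] = 1/(n(n+2))$ for $\ell \ne m$). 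The upshot I expect is that $E(\phi)$ is affine in $\cos^2\phi$: something of the form $E(\phi) = \alpha + \beta \cos^2\phi$ where $\beta \ge 0$ is a nonnegative combination of the $\lambda_\ell \lambda_m$ (in fact $\beta$ should be proportional to $(\trace A)^2 + \|A\|_F^2 \cdot (\text{const}) - \ldots$, i.e.\ manifestly $\ge 0$ after collecting terms — it should come out as a positive multiple of $\sum_{\ell,m}\lambda_\ell\lambda_m$ plus a positive multiple of $\sum_\ell \lambda_\ell^2$, both nonnegative); this monotonicity/positivity is the crux. Then $\E[f_A(x_i)f_A(x_j)] = \E_\theta[E(\theta_{ij})] = \alpha + \beta\, \E_\theta[\cos^2 \theta_{ij}]$, which is minimized over all distributions on $\theta_{ij}$ by putting $\theta_{ij} = \pi/2$ deterministically (making $\cos^2\theta_{ij} = 0$), simultaneously for all pairs. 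The diagonal terms $i = j$ are unaffected since they do not involve any angle.

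**Assembling.** Since $\E\left[\left(\sum_i w_i f_A(x_i)\right)^2\right] = \sum_i w_i^2 \,\E[f_A(x_i)^2] + \sum_{i \ne j} w_i w_j \left(\alpha + \beta \E_\theta[\cos^2\theta_{ij}]\right)$, and the first sum plus the $\alpha$ part of the second sum are the same for the orthogonal estimator (where $\E[\cos^2\theta_{ij}] = 0$), the difference between the two variances in~\eqref{eqn:ot} is exactly $\beta \sum_{i \ne j} w_i w_j \,\E_\theta[\cos^2\theta_{ij}]$ — wait, this requires care, since $\sum_{i \ne j} w_i w_j$ need not be nonnegative. So I would instead keep the pair-dependence: the difference is $\sum_{i \ne j} w_i w_j \,\beta\, \E_\theta[\cos^2\theta_{ij}]$, and here I must be careful that the weights $w^\theta$ may depend on $\theta$. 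The clean way around this is to do the argument conditionally: fix the realized configuration $\theta$ (hence $w^\theta$), and compare the conditional second moment given $\theta$ against that of the orthogonal estimator with the \emph{same} weights $w^\theta$. For a fixed weight vector $w$, I claim $\sum_{i,j} w_i w_j E(\theta_{ij}) \ge \sum_{i,j} w_i w_j E(\pi/2)$, i.e.\ $\beta \sum_{i\ne j} w_i w_j \cos^2\theta_{ij} \ge 0$. This is where I expect the real obstacle: it is \emph{not} true for arbitrary $w$ and arbitrary $\theta$ that $\sum_{i \ne j} w_i w_j \cos^2\theta_{ij} \ge 0$. The resolution must use that $\cos^2\theta_{ij} = \la x_i, x_j\ra^2$ comes from actual unit vectors, so the matrix $M_{ij} = \la x_i, x_j\ra^2$ is a Gram matrix of the $x_i x_i^T$ in the PSD cone (it equals $\langle x_ix_i^T, x_jx_j^T\rangle$ as Hilbert–Schmidt inner products), hence positive semidefinite with diagonal $1$; therefore $\sum_{i,j} w_i w_j \la x_i,x_j\ra^2 \ge 0$ always, and subtracting the diagonal ($\sum_i w_i^2$) leaves $\sum_{i\ne j} w_i w_j \cos^2\theta_{ij} = w^T M w - \|w\|^2$, which need not be signed — so instead I keep the diagonal: writing the second moment as $\sum_{i,j} w_iw_j(\alpha\cdot[i\ne j] + \beta\langle x_i,x_j\rangle^2) + (\text{diag correction})$ and noting $\alpha$-part and diagonal are angle-independent, the only angle-dependent piece is $\beta\, w^T M w$ with $M \succeq 0$; this is minimized (for fixed $w$, over all valid $M$, in particular over all $\theta$) when $M = I$, i.e.\ $x_i$ orthogonal, giving $\beta \|w\|^2$. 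Since this holds for every fixed $\theta$ (with its weights $w^\theta$), taking expectation over $\theta \sim P_\Theta$ proves~\eqref{eqn:ot}. The main work, then, is (i) the explicit pair moment computation yielding the form $E(\phi) = \alpha + \beta\la u,v\ra^2$ with $\beta \ge 0$ when $\|A\|_F$ is normalized (using $A$ symmetric, so diagonalizable), and (ii) correctly bookkeeping the diagonal versus off-diagonal terms so that the comparison is against the orthogonal estimator with identically-distributed weights.
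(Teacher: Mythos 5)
Your approach is essentially the same as the paper's: expand the second moment, compare term by term, and show that the off-diagonal pair expectation $\E[(x_i^T Ax_i)(x_j^TAx_j)]$ for unit vectors at a fixed angle $\phi$ is affine in $\cos^2\phi$ with nonnegative slope. (The paper gets this by writing $x_j = \cos\theta_{ij}\,y_i + \sin\theta_{ij}\,y_j$ and applying Cauchy--Schwarz to deduce $\E[(y_i^TAy_i)^2]\ge\E[(y_i^TAy_i)(y_j^TAy_j)]$, which is exactly your claim that $\beta \ge 0$.) You are also right to flag the sign subtlety that the paper glosses over: in the paper's chain of inequalities, the step that lower-bounds the $\cos^2\theta_{ij}$ term already requires $\E_\theta[w_i^\theta w_j^\theta\cos^2\theta_{ij}]\ge 0$, and this is nowhere established.

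Your proposed fix, however, does not work. It is false that over PSD matrices $M$ with unit diagonal $w^TMw$ is minimized at $M=I$: for $w = (3,-1)$ and $M$ the all-ones $2\times 2$ matrix (realized by coincident queries $x_1 = x_2$), one has $w^TMw = 4 < 10 = w^TIw$. This is not merely a hole in the argument; the inequality in the lemma genuinely fails for sign-mixed weights. With $n=k=2$, the deterministic configuration $\theta_{12}=0$ with weights $w=(3,-1)$ gives the unbiased estimator $3f_A(x_1)-f_A(x_2)=2f_A(x_1)$, of variance $4\Var(f_A(x_1))$; replacing the queries by random orthogonal $y_1,y_2\in\R^2$ (so that $f_A(y_1)+f_A(y_2)=\trace(A)$) gives $3f_A(y_1)-f_A(y_2)=4f_A(y_1)-\trace(A)$, of variance $16\Var(f_A(y_1))$, which is strictly larger. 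So neither your closing step nor the paper's pairwise inequality is correct as written; an additional reduction is needed (for instance, to nonnegative weight products $\E_\theta[w_i^\theta w_j^\theta\cos^2\theta_{ij}]\ge 0$, or a direct comparison against the final equal-weight orthogonal estimator rather than a pointwise orthogonalization that keeps $w^\theta$ fixed).
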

\begin{proof}

It is easy to see that the marginal distribution on each $x_i$ is the same as the marginal distribution on $y_i$.  Therefore, we have that $$\E_{\theta, y_1,\ldots,y_k}\left[\sum_{i=1}^k w^\theta_i  f_A(y_i)\right] =  \E_{x_1,\ldots,x_k,\theta}\left[ \sum_{i=1}^k w^\theta_i  f_A(x_i)\right] = \trace(A)$$ This implies that $\sum_{i=1}^k w^\theta_i  f_A(y_i)$ is also an unbiased estimator.

Since both estimators have the same expectation, in order to show~\eqref{eqn:ot}, it suffices to prove that 
\begin{equation}\label{eqn:o2}
\E_{\theta,x_1,\ldots, x_n}\left[\left(\sum_{i=1}^k w^\theta_i  f_A(x_i)\right)^2\right]\geq \E_{\theta,y_1,\ldots,y_n}\left[\left(\sum_{i=1}^k w^\theta_i  f_A(y_i)\right)^2\right]
\end{equation}
By the process of generating $x_1,x_2,\ldots, x_k$, we know that the marginal distribution of $x_i$ is independent of $\theta$ and equal to the marginal distribution of $y_i$. 
If we expand the left hand side of \eqref{eqn:o2}, we have that 
\begin{eqnarray*}
&&\E_{\theta,x_1,\ldots,x_n}\left[\left(\sum_{i=1}^k w^\theta_i  f_A(x_i)\right)^2\right]  \\
&=& \sum_{i=1}^k \E_{\theta}[(w^\theta_i)^2]\Ex_{x_i} [f_A(x_i)^2] +2 \sum_{1\leq i <j\leq k} \E_{\theta, x_i,x_j}[w_i^\theta w_j^\theta f_A(x_i)f_A(x_j)] 
\\& =&  \sum_{i=1}^k \E_{\theta}[(w^\theta_i)^2 ]\E_{y_i}[f_A(y_i)^2] + 2\sum_{1\leq i <j\leq k} \E_{\theta,x_i,x_j}[w_i^\theta w_j^\theta f_A(x_i)f_A(x_j)]
\end{eqnarray*}
If we expand the right hand side of \eqref{eqn:o2}
we have that 
\begin{eqnarray*}
&& \E_{\theta,y_1,\ldots, y_n}\left[\left(\sum_{i=1}^k w^\theta_i  f_A(y_i)^2\right)\right]  \\&=& \sum_{i=1}^k \E_{\theta}[(w^\theta_i)^2 ]\E_{y_i}[f_A(y_i)^2] +2 \sum_{1\leq i <j\leq k} \E_{\theta}[w_i^\theta w_j^\theta]\E_{y_i,y_j}[f_A(y_i)f_A(y_j)] 
\end{eqnarray*}
Therefore, in order to prove~\eqref{eqn:o2}, it suffices to prove that for any $i$ and $j$, we have
\begin{equation}\label{eqn:ij}
 \E_{\theta,x_i,x_j}[w_i^\theta w_j^\theta f_A(x_i)f_A(x_j)] \geq \E_{\theta}[w_i^\theta w_j^\theta] \E_{y_i,y_j}[f_A(y_i)f_A(y_j)]
\end{equation}

To compare $\E_{\theta,x_i,x_j}[w_i^\theta w_j^\theta f_A(x_i)f_A(x_j)]$ and $\E_{\theta}[w_i^\theta
w_j^\theta] \E_{y_i,y_j}[f_A(y_i)f_A(y_j)]$, we note that the marginal distribution on the pair $(x_i,x_j)$ is equivalent to drawing $x_i,x_j$ from the following process:
\begin{enumerate}
\item Draw $\theta\sim P_{\Theta}$.
\item Set $x_i = y_i$ and $x_j = y_i \cos \theta_{ij} + y_j \sin \theta_{ij}$.
\end{enumerate}
It is easy to check that the joint distribution on $x_i$ and $x_j$ has the same distribution as two random unit vectors with angle $\theta_{ij}$.

Therefore, 
\begin{eqnarray}\nonumber
 && \E_{\theta,x_i,x_j} [w_i^\theta w_j^\theta f_A(x_i)f_A(x_j)]  
 \\ \nonumber &=& \E_{\theta,y_i,y_j} [w_i^\theta w_j^\theta y_i^T A y_i (\cos \theta_{ij} \cdot y_i +\sin \theta_{ij} \cdot y_j)^{T}A(\cos \theta_{ij} \cdot y_i +\sin \theta_{ij} \cdot y_j)] 
 \\ \nonumber &=&
\E_{\theta} [w_i^\theta w_j^\theta  \cos ^2 \theta_{ij}]\E_{y_i} [ y_i^T A y_i \cdot y_i^T A y_i] + \E_{\theta} [w_i^\theta w_j^\theta  \sin ^2 \theta_{ij}]\E_{y_i,y_j} [ y_i^{T} A y_i  y_j^T A y_j] 
\\ \label{eqn:long} 
&+& \E_{\theta} [w_i^\theta w_j^\theta  \sin \theta_{ij}\cos \theta_{ij}]\E_{y_i,y_j} [ y_i^T A y_i  y_i^TA y_j +  y_i^T A y_i  y_j^T A y_i] 
%&+&2 \E_{\theta} [w_i^\theta w_j^\theta  \sin \theta_{ij}\cos \theta_{ij}]\E_{y_i,y_j} [ y_i^T A y_i  y_i^TA y_j] 
\end{eqnarray}

%
% Notice that $y_j$ is uniformly random conditioned on being orthogonal to $y_i$, so $y_j$ and $-y_j$ are identically distributed and thus
%\ignore{Since $f_{Y_j|Y_i}(y_j|y_i)=f_{Y_j|Y_i}(-y_j|y_i)$, we know that} $\E[\sin\theta \cos\theta y_i^TAy_iy_i^TAy_j]=0$. On the other hand, $y_i^TAy_i$ and $\theta$ are independent, as are $y_j^TAy_j$ and $\theta $. Thus,
%\[\Cov(f_A(x_i), f_A(x_j))=\E[\cos^2\theta ]\E[(y_i^TAy_i)^2] + \E[\sin^2\theta ]\E[y_i^TAy_iy_j^TAy_j] - \trace(A)^2\]
%
%Also, by Cauchy-Schwarz, we have
%
%\[
%\E[y_i^TAy_iy_j^TAy_j] \leq \sqrt{\E[(y_i^TAy_i)^2]\E[y_j^TAy_j)^2]} = \E[(y_i^TAy_i)^2]
%\]
%
%since the vectors $y_1,y_2,\ldots,y_k$ are identically distributed.
%
In order to simplify the above expression, we first claim that 
\[
\E_{y_i,y_j} [ y_i^T A y_i  y_i^TA y_j+ y_i^T A y_i  y_j^TA y_i]  =0.
\]
To see this, note that $y_j$ is a random unit vector orthogonal to $y_i$.  Conditioned on any fixed realization of $y_i$, the distribution on $y_j$ is symmetric about $\vec{0}$; $y_j$ has the same distribution as $-y_j$.

In addition, using Cauchy-Schwarz and the fact that $y_i$ and $y_j$ have the same distribution, we have that
\[
        \E_{y_i} [( y_i^T A y_i )^2]  =\sqrt{\E_{y_i} [( y_i^T A y_i )^2]\E_{y_j} [( y_j^T A y_j )^2]}\geq \E_{y_i,y_j} [ y^T_j A y_j \cdot y_i^{T} A y_{i}].
\]

Therefore, we have that 
\begin{eqnarray*}
\eqref{eqn:long}&\geq & \E_{\theta} [w_i^\theta w_j^\theta  \cos ^2 \theta]\E_{y_i,y_j} [ y_j^T A y_j \cdot y_i^T A y_i] + \E_{\theta} [w_i^\theta w_j^\theta  \sin ^2 \theta]\E_{y_i,y_j} [ y_i^T A y_i  y_j^T A y_j] \\&=& \E_{\theta}[w^\theta_i w^\theta_j ]\cdot \E_{y_i,y_j}[y_j^T A y_j \cdot y_i^T A y_i] 
\end{eqnarray*}
which proves \eqref{eqn:ij}, completing the proof of Lemma~\ref{lem:ort}.
\end{proof}

Now that we can assume that the queries are mutually orthogonal, we can view this as an estimator with randomized weights $w_i^\theta$ for $\theta\sim P_{\Theta}$.  Below we will use the random variable $w_i$ to denote $w_i^\theta$ as $\theta$ is independent from $y_1,y_2, \ldots, y_k$.  

\begin{lemma}
\label{lem:equal-weights}

Let $(y_1,\ldots,y_k)$ be $k$ random orthogonal unit vectors.  Then the estimator $h=\sum_{i=1}^k w_if_A(y_i)$  has minimum variance when $w_1 = w_2 = \cdots = w_k = n/k$.
\end{lemma}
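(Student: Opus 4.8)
The plan is to reduce the optimization over the (random) weight vectors to a deterministic convex program, and then solve that program using Lagrange multipliers (or a symmetry argument). First I would observe that since $\theta$ is drawn independently of $y_1,\ldots,y_k$, the estimator $h = \sum_i w_i f_A(y_i)$ has
\[
\Var(h) = \E\!\left[\Big(\littlesum_i w_i f_A(y_i)\Big)^2\right] - \trace(A)^2,
\]
and expanding the square and taking expectations over the independent pair $(w_1,\ldots,w_k)$ and $(y_1,\ldots,y_k)$ gives
\[
\E\!\left[\Big(\littlesum_i w_i f_A(y_i)\Big)^2\right]
= \littlesum_i \E[w_i^2]\,\E[f_A(y_i)^2]
+ 2\littlesum_{i<j} \E[w_i w_j]\,\E[f_A(y_i)f_A(y_j)].
\]
Here I would use that $y_1,\ldots,y_k$ are exchangeable: all the $\E[f_A(y_i)^2]$ are equal to a common value $\alpha = \E[(y_1^TAy_1)^2]$, and all the $\E[f_A(y_i)f_A(y_j)]$ for $i\neq j$ are equal to a common value $\beta = \E[y_1^TAy_1 \cdot y_2^TAy_2]$. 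So the second moment depends on the weight distribution only through $M := \E[\sum_i w_i^2]$ and $S := \E[(\sum_i w_i)^2 - \sum_i w_i^2] = \E[\sum_{i\neq j} w_iw_j]$, namely it equals $\alpha M + \beta S$. The unbiasedness constraint forces $\E[\sum_i w_i f_A(y_i)] = \trace(A)$; since $\E[y_1^TAy_1] = \trace(A)/n$ (as $y_1$ is a uniformly random unit vector) and the $w_i$ are independent of the $y_i$, this says $\E[\sum_i w_i]\cdot \trace(A)/n = \trace(A)$, i.e. $\E[\sum_i w_i] = n$ whenever $\trace(A)\neq 0$ (and the edge case $\trace(A)=0$ is handled separately, or one simply picks $A$ with $\trace(A)\neq 0$ to enforce this).

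Next I would argue that a deterministic weight vector is at least as good as a randomized one. By Jensen / the law of total variance (Lemma~\ref{lem:averaging} applied in this setting), randomizing over weight vectors can only increase variance, so the optimum is attained by a deterministic $(w_1,\ldots,w_k)$ with $\sum_i w_i = n$. For such a vector $M = \sum_i w_i^2$ and $S = n^2 - \sum_i w_i^2$, so the second moment is $\alpha\sum_i w_i^2 + \beta(n^2 - \sum_i w_i^2) = (\alpha-\beta)\sum_i w_i^2 + \beta n^2$. Since $\beta n^2$ is a constant and, by the Cauchy–Schwarz step already established inside the proof of Lemma~\ref{lem:ort}, $\alpha \geq \beta$ (i.e. $\E[(y_1^TAy_1)^2] \geq \E[y_1^TAy_1\cdot y_2^TAy_2]$), minimizing the variance amounts to minimizing $\sum_i w_i^2$ subject to $\sum_i w_i = n$. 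By Cauchy–Schwarz (or Lagrange multipliers) this is minimized uniquely at $w_1 = \cdots = w_k = n/k$, giving the claim.

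The main obstacle I anticipate is the case $\alpha = \beta$: then $\Var(h)$ does not depend on the choice of weights at all, so the minimizer is not unique and the lemma as literally stated (``has minimum variance when $w_i = n/k$'') is still true but trivially so. I would need to check whether $\alpha > \beta$ strictly for the matrices $A$ that govern the worst-case variance $\Var(h) = \sup_{\|A\|_F=1}\Var(A,h)$; this requires understanding $\E[(y_1^TAy_1)^2]$ and $\E[y_1^TAy_1\cdot y_2^TAy_2]$ for random orthogonal unit vectors $y_1,y_2$, which can be computed explicitly from the moments of the uniform distribution on the sphere (and the fact that $y_2$ lies uniformly in the orthogonal complement of $y_1$). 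A secondary, more bookkeeping-level concern is making the reduction from randomized to deterministic weights fully rigorous given that the weights $w_i^\theta$ are correlated across $i$ through $\theta$; but since the $f_A(y_i)$ contribute only through the two exchangeable constants $\alpha,\beta$, the whole dependence collapses to the scalars $M$ and $S$, and $\E[\sum w_i^2] \geq (\E[\sum w_i])^2/k = n^2/k$ by Cauchy–Schwarz applied to the random vector $(w_1,\ldots,w_k)$ together with convexity of $t\mapsto t^2$, so the randomized case needs no separate treatment.
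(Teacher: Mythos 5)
Your plan is correct and, once the superfluous detour is removed, follows essentially the same route as the paper: expand the second moment of $\sum_i w_i f_A(y_i)$, use exchangeability to write it as $\alpha M + \beta S$ with $\alpha = \E[f_A(y_1)^2]$, $\beta = \E[f_A(y_1)f_A(y_2)]$, impose the unbiasedness constraint $\E[\sum_i w_i] = n$, obtain $\E\bigl[\sum_i w_i^2\bigr] \geq n^2/k$ by Jensen/Cauchy--Schwarz, and invoke $\alpha \geq \beta$ from the Cauchy--Schwarz step of Lemma~\ref{lem:ort}. Two small remarks. First, the intermediate reduction to a deterministic weight vector via Lemma~\ref{lem:averaging} is not needed, and as stated it is slightly off: that lemma presupposes every estimator in the mixed family is unbiased, but a random weight vector is only required to satisfy $\E[\sum_i w_i] = n$, not $\sum_i w_i = n$ pointwise, so the conditional estimators $h_w$ being averaged may individually be biased. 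The paper sidesteps this by working with the moments $M = \E[\sum_i w_i^2]$ and $T = \E[(\sum_i w_i)^2]$ directly --- using $T \geq n^2$ and $M \geq T/k$ --- which is exactly where your final paragraph lands, so just lead with that and drop the averaging step. (One thing both the paper and your write-up leave implicit: when $\beta < 0$ one must chain $M \geq T/k$ and $T \geq n^2$ together with the observation $\alpha + (k-1)\beta \geq 0$ rather than bound $M$ and $T$ separately; it is a one-line fix.) Second, the concern about $\alpha = \beta$ is moot: the lemma claims only that $w_i = n/k$ attains minimum variance, not that the minimizer is unique, so the non-strict inequality $\alpha \geq \beta$ is all that is required.
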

\begin{proof}
First we must have  $\E[\sum_{i=1}^k w_i ]= n$ to make the estimator unbiased, since $\E[f_A(y_i)]=\trace(A)/n$. Also, 
\[
        \E\left[\left(\sum_{i=1}^k w_i \right)^2\right]=\E\left[\sum_{i=1}^k w_i^2\right] +2\cdot \E\left[\sum_{1\leq i<j\leq k} w_iw_j\right]  \geq n^2
\]

Minimizing the variance is equivalent to minimizing 

\begin{eqnarray*}
&& \E_{w,y}\left[\left(\sum_{i=1}^k w_if_A(y_i)\right)^2\right] \\
&=& \sum_{i=1}^k  \E[f_A(y_i)^2]\E[w_i^2] + 2\cdot \sum_{1\leq i < j\leq k} \E[f_A(y_i) f_A(y_j)]\cdot \E[w_i w_j]\\
&=& \E\left[\sum_{i=1}^k w_i^2\right]\E[f^2_A(y_1)] + \left(\E\left[\left(\sum_{i=1}^k w_i\right)^2\right]-\E\left[\sum_{i=1}^k w_i^2\right]\right)\E[f_A(y_1)
f_A(y_2)]\\
&\ge& \frac{n^2}{k}\E[f_A(y_1)^2] + \left(n^2-\frac{n^2}{k}\right)\E[f_A(y_1)
f_A(y_2)]\\
&=& \E\left[\left(\sum_{i=1}^k \frac{n}{k}f_A(y_i)\right)^2\right]\\
\end{eqnarray*}
Equality holds 
%when $\sum_{i=1}^k w_i^2 = 1$, which holds 
for $w_1=\cdots=w_k=n/k$, completing the proof.
\end{proof}

Combining Lemma~\ref{lem:ort} and Lemma~\ref{lem:equal-weights}, the minimum variance linear nonadaptive unbiased estimator making $k$ queries is $\sum_{i=1}^k \frac{n}{k} f_A(y_i)$, where $y_1,y_2,\ldots,y_k$ is a collection of random orthogonal unit vectors.  This completes the proof of Theorem~\ref{thm:main1}.

\ignore{
\begin{theorem}
Linear estimator $h=\sum_{i=1}^k w_if_A(x_i)$ has the minimum variance when $x_1,\ldots,x_k$ are orthogonal uniform vectors and $w_1=\cdots=w_k=1/k$.
\end{theorem}
}

\section{Every $\delta$-variance estimator requires  $\Omega(1/\delta)$-queries}\label{sec:varlb}
% !TEX root =main.tex

\begin{theorem}\label{thm:varlb} Any estimator with variance $\delta$ requires $\Omega(1/\delta)$ queries.
\end{theorem}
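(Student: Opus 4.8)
The plan is to combine the symmetrization reduction of Section~\ref{sec:sym} with a two‑point (Le~Cam) argument against a pair of diagonal matrices whose orbits cannot be told apart using $k$ queries, the indistinguishability being quantified through the Gaussian KL formula (Theorem~\ref{thm:klg}) and Pinsker's inequality (Theorem~\ref{thm:kl}).

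\emph{Reduction.} By Lemmas~\ref{lem:rotation} and~\ref{lem:averaging} it suffices to prove the bound for a \emph{symmetric} estimator $h$. For such an $h$, $\Var(h,\cdot)$ is constant on each orbit $\{U^TAU : U \text{ orthogonal}\}$, and on the orbit of $A$ the estimator $h$ produces exactly the transcript (queries together with answers) that its base estimator produces when run on a uniformly random rotation $U^TAU$. Hence, if $\mathcal D^{(\theta)}$ denotes the orbit of a fixed diagonal matrix $D^{(\theta)}$ with $\|D^{(\theta)}\|_F=1$ and $\trace(D^{(\theta)})=\theta$, then $\Var(h)\ge\Var(h,\mathcal D^{(\theta)})$ is exactly the mean‑squared error of $h$, viewed as an estimator of the scalar $\theta$, from the transcript it generates on a random rotation of $D^{(\theta)}$.

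\emph{The hard instances.} For a large dimension $n$ (even), let $D^{(\theta)}$ be the diagonal matrix with $n/2$ entries equal to $\tfrac\theta n+t_\theta$ and $n/2$ entries equal to $\tfrac\theta n-t_\theta$, where $t_\theta=\sqrt{1/n-\theta^2/n^2}$; then $\trace(D^{(\theta)})=\theta$, $\|D^{(\theta)}\|_F=1$, and every diagonal entry has magnitude $O(1/\sqrt n)$. When $h$ issues a unit query $x$, the answer on $U^TD^{(\theta)}U$ equals $\tfrac\theta n+\langle\diag(D^{(\theta)})-\tfrac\theta n\mathbf 1,\;w\circ w\rangle$, where $w=Ux$ is a uniformly random unit vector and $\circ$ is the entrywise product. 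The payoff of symmetrization is that the answer is $\tfrac\theta n$ plus zero‑mean noise \emph{whose law does not depend on $x$} — there is no distinguished query. Since the $\pm t_\theta$ weights are $O(1/\sqrt n)$, a central‑limit/concentration estimate shows this noise is within total variation $o(1)$ (as $n\to\infty$) of $N(0,c/n^2)$ for an absolute constant $c$, uniformly over the bounded range of $\theta$ we use.

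\emph{Indistinguishability and conclusion.} I will argue that the whole $k$‑query transcript on $\mathcal D^{(\theta_0)}$ is close in total variation to that on $\mathcal D^{(\theta_1)}$ whenever $|\theta_0-\theta_1|=\Theta(1/\sqrt k)$. Coupling the true answers to the Gaussian model above costs $o(1)$ in total variation; in the Gaussian model, $k$ distinct orthogonal queries yield i.i.d.\ answers $N(\theta/n,\,c/n^2)$, while correlated or adaptive queries can only reduce the information available about $\theta$. This last point is where the ``stronger query model'' of the overview enters: grant each query a fresh effective random direction orthogonal to all previous ones, which can only help the estimator, and in that model the answers are i.i.d.\ regardless of the (adaptive, possibly nonlinear) choices, so adaptivity and nonlinearity are neutralized. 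Then Theorem~\ref{thm:klg} gives $d_{KL}=\tfrac k2\cdot\frac{(\theta_0-\theta_1)^2/n^2}{c/n^2}=\frac{k(\theta_0-\theta_1)^2}{2c}$ between the two $k$‑fold product Gaussians, so Pinsker (Theorem~\ref{thm:kl}) bounds the transcript total variation by $O(\sqrt k\,|\theta_0-\theta_1|)+o(1)$. Choosing $|\theta_0-\theta_1|$ to be a sufficiently small constant times $1/\sqrt k$ and $n$ large makes this at most $\tfrac14$, and the standard two‑point inequality then forces $\Var(h)\ge\Var(h,\mathcal D^{(\theta_b)})\ge\tfrac1{16}(\theta_0-\theta_1)^2=\Omega(1/k)$ for some $b\in\{0,1\}$; hence a $\delta$‑variance estimator requires $k=\Omega(1/\delta)$ queries. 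The routine steps are the Gaussian KL computation and Pinsker; the part I expect to be the real obstacle is rigorously neutralizing adaptivity — formalizing the ``stronger query model'' so that the reduction to $k$ i.i.d.\ Gaussian observations is valid for \emph{all} (adaptive, nonlinear, real‑branching) estimators — together with the uniform CLT/concentration needed to control the per‑query noise jointly across the $k$ queries.
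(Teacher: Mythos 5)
Your high-level plan is genuinely the same as the paper's: symmetrize, reduce the variance bound to a two-point indistinguishability statement (the paper uses Chebyshev plus ``distinguish $P_1$ from $P_2$'' where you use Le~Cam directly, but these are equivalent), pass to a Gaussian surrogate, and close with the Gaussian KL formula (Theorem~\ref{thm:klg}) and Pinsker (Theorem~\ref{thm:kl}). The two-point arithmetic, the Gaussian KL calculation, and the $\Omega(1/\delta)$ bookkeeping are all fine.

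Where you diverge, and where the gap is, is in the choice of hard instances and, consequently, in the two steps you yourself flag as the obstacles. The paper's hard instances are \emph{rank-two}: $A = \sqrt{\alpha}\,uu^T + \sqrt{\beta}\,vv^T$ with $u,v$ a random orthonormal pair. Low rank is not cosmetic; it is what makes the rest of the proof go through cleanly. It lets the paper define a \emph{strong query} that returns the scalars $\sqrt{\alpha}\,(u\cdot x)$ and $\sqrt{\beta}\,(v\cdot x)$ (a genuine strengthening of the oracle, since $x^TAx$ is a function of these), and then Lemma~\ref{lem:strong} shows by a crisp induction that with strong queries one may assume the queries are random orthogonal unit vectors --- the conditional law of $(u,v)$ restricted to $\span_{i-1}^\perp$ is again a random rotation, so the $i$-th query gains nothing by being chosen adaptively. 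After that reduction, the transcript is exactly a $2\times k$ block of a random orthogonal matrix, and the Gaussian approximation is a direct citation of Theorem~\ref{thm:LNW}. With your full-rank balanced diagonal $D^{(\theta)}$, there is no analogous strong query: the natural candidate, revealing $Ux$, is \emph{too} strong (it determines $x^T U^T D^{(\theta)} U x$ exactly as a function of $\theta$, so one query would pin down $\theta$), while ``grant each query a fresh random orthogonal direction'' is a \emph{restriction} on the estimator, not a strengthening of the oracle, and the assertion that it ``can only help'' is not justified --- conditioned on earlier answers, the posterior on $U$ is not rotation-invariant, and an adaptive estimator could in principle exploit that. This is the part of Lemma~\ref{lem:strong}'s role that your scheme has no substitute for. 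The second acknowledged gap is also real: even for fixed orthogonal queries, the $k$ answers $\bigl(\sum_i D^{(\theta)}_{ii}(Ux_j)_i^2\bigr)_{j\le k}$ are dependent (the $Ux_j$ are constrained to be orthonormal), so you need a joint, uniform-in-$\theta$ TV approximation to the product Gaussian, not just a per-coordinate CLT; the paper sidesteps this entirely by reducing to a $2\times k$ sub-block of a Haar orthogonal matrix and invoking Theorem~\ref{thm:LNW}. So: correct strategy and correct target bound, but the concrete instance you picked strands you at exactly the two places you worried about, and the paper's rank-two design is the missing ingredient that resolves both.
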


The main idea to show such a lower bound is to reduce the problem from getting a $\delta$-variance estimator to an easier problem of distinguishing two distributions.  Here, the unknown matrix is drawn from either distribution $P_1$ or distribution $P_2$, and our goal is to determine, with high probability, which distribution the sample came from.  Importantly, the trace of every matrix in the support of $P_1$ is far from the trace of every matrix in the support of $P_2$.

For the problem here, we define $P_1$ and $P_2$ as follows, with parameter $\eps=O(\sqrt{\delta})$ precisely specified later :
\begin{enumerate}
\item A draw from distribution $P_1$ is the matrix $A_1 =1/\sqrt{5} (u u^T + 2 vv^T)$, where $u,v$ are two random orthogonal unit vectors.
\item A draw from distribution $P_2$ is the matrix $A_2 =  1/C\left((1+2\eps) u u^T +(2-\eps)vv^T\right)$, where $u,v$ are two random orthogonal unit vectors, for $$C=\sqrt{(1+2\eps)^2+(2-\eps)^2}=\sqrt{5(1+\eps^2)}$$
\end{enumerate}

It is easy to check that we have $\|A_1\|_F=\|A_2\|_F=1$.  It will be more convenient to write $A_2$ as that

\[
A_2 =  \frac{1}{\sqrt{5(1+\eps^2)}}\left((1+2\eps) u u^T +(2-\eps)vv^T\right)=\frac{1}{\sqrt{5}}((1+\eps_1)uu^T+(2-\eps_2)vv^T)
\]
where we have set $\eps_1= (1+2\eps)/\sqrt{1+\eps^2}-1$ and $\eps_2=2-(2-\eps)/\sqrt{1+\eps^2}$.  We will also write $\eps_3 = \eps_1-\eps_2$.  For the rest of the proof, we do not explicitly write $\eps_i$ as an expression of $\eps$.  Instead, our proof only requires that $\eps_1,\eps_2\leq O(\eps)$ and  $\eps_3= \Omega(\eps)$.  It is straightforward to verify that $|\trace(A_1)-\trace(A_2)|=\eps_3=\Omega(\eps)$.

The proof of Theorem~\ref{thm:varlb}  consists of the following two lemmas.

\begin{lemma}\label{lem:rank2}  Suppose there is an $\eps_3^2/60$-variance estimator that makes $k$ queries.  Then we can distinguish $P_1$ from $P_2$ with probability at least $2/3$ using $k$ queries.
\end{lemma}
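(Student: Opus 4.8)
The plan is to use the standard reduction: if some estimator $h$ has small worst-case variance, then its output concentrates around $\trace(A)$ on every matrix, so by Chebyshev it can be turned into a distinguisher for any two matrix families whose traces are separated. Concretely, suppose $h$ is an $\eps_3^2/60$-variance estimator making $k$ queries. Since $\|A_1\|_F = \|A_2\|_F = 1$, the variance bound applies to every matrix in the support of $P_1 \cup P_2$: for every such $A$, $\Ex[(h(A) - \trace(A))^2] \le \eps_3^2/60$. Define the distinguisher $\calD$ that, given oracle access to the unknown $A$, runs $h$ to obtain an estimate $t = h(A)$, and outputs ``$P_1$'' if $t$ is closer to $\trace(A_1)$ than to $\trace(A_2)$, and ``$P_2$'' otherwise. (Note $\trace(A_1) = 3/\sqrt5$ is a fixed number, as is $\trace(A_2)$, so this decision rule does not depend on which particular $u,v$ were drawn.)

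The correctness analysis is then a one-line Chebyshev argument. If $A$ is drawn from $P_1$, then $\trace(A) = \trace(A_1)$, and $h$ errs only if $|h(A) - \trace(A_1)| \ge |\trace(A_1) - \trace(A_2)|/2 = \eps_3/2$. By Chebyshev's inequality and the variance bound,
\[
\Pr\big[|h(A) - \trace(A_1)| \ge \eps_3/2\big] \le \frac{\eps_3^2/60}{(\eps_3/2)^2} = \frac{4}{60} = \frac{1}{15} < \frac13.
\]
The identical bound holds when $A \sim P_2$, since again $\trace(A) = \trace(A_2)$ and the variance bound is uniform over the support. Hence $\calD$ succeeds with probability at least $1 - 1/15 \ge 2/3$ on both distributions, and it makes exactly the $k$ queries that $h$ makes. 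This is precisely the claimed reduction.

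There is essentially no hard part here — the only things to be careful about are (i) checking that the traces are separated by exactly $\eps_3$, which is already asserted in the excerpt ($|\trace(A_1) - \trace(A_2)| = \eps_3 = \Omega(\eps)$), and (ii) observing that the variance guarantee of Definition~\ref{def:var-estimator} is a worst-case bound over all unit-Frobenius-norm matrices, so it holds simultaneously on the entire supports of $P_1$ and $P_2$ with no dependence on the random rotation. The genuine content of the lower bound argument lives entirely in the companion lemma, which must show that $P_1$ and $P_2$ \emph{cannot} be distinguished with probability $2/3$ using $o(1/\eps^2) = o(1/\delta)$ queries; combining that with Lemma~\ref{lem:rank2} forces $k = \Omega(1/\delta)$ and proves Theorem~\ref{thm:varlb}. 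The symmetrization machinery of Section~\ref{sec:sym} will be what makes that companion lemma tractable, since it reduces distinguishing the orbits of the two diagonal matrices to distinguishing the diagonal matrices themselves under a rotationally-invariant query model.
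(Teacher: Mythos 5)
Your proof is correct and matches the paper's: a threshold test at the midpoint of the two traces, with success probability controlled by Chebyshev's inequality and the worst-case variance bound (and you rightly note the variance guarantee of Definition~\ref{def:var-estimator} is already uniform over all unit-Frobenius-norm matrices, so the paper's opening symmetrization step is unnecessary here). One small numerical slip inherited from the paper's prose: the actual trace separation is $\eps_3/\sqrt{5}$, not $\eps_3$ (since $\trace(A_1)=3/\sqrt{5}$ and $\trace(A_2)=(3+\eps_3)/\sqrt{5}$), so the Chebyshev bound is $\frac{\eps_3^2/60}{(\eps_3/(2\sqrt{5}))^2}=1/3$ rather than your $1/15$ --- exactly why the constant $60$ appears --- but the conclusion is unchanged.
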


\begin{lemma}\label{lem:testlb} Any algorithm that can distinguish $P_1$ from $P_2$ with probability at least $2/3$ requires at least $\Omega(1/\eps^2)$ queries.
\end{lemma}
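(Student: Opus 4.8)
The plan is to convert the distinguishing task into a total-variation comparison of two multivariate Gaussians, and then finish with Theorem~\ref{thm:klg}, Pinsker's inequality (Theorem~\ref{thm:kl}), and Proposition~\ref{claim:vd}. The key enabler is symmetrization: by the results of Section~\ref{sec:sym} we may assume the distinguishing algorithm is \emph{symmetric}, and a symmetric algorithm's output has the same distribution on $A$ and on $U^TAU$ for every orthogonal $U$, hence depends only on the conjugacy orbit of its input. A draw from $P_b$ is exactly $U^TD_bU$ for a Haar-random $U$, where $D_1=\tfrac1{\sqrt5}\diag(1,2,0,\dots,0)$ and $D_2=\tfrac1{\sqrt5}\diag(1+\eps_1,2-\eps_2,0,\dots,0)$. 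So distinguishing $P_1$ from $P_2$ is equivalent to the following game: a uniform bit $b$ and a random orthonormal pair $u,v\in\R^n$ (the first two rows of a Haar $U$) are hidden, the algorithm makes $k$ queries to $f_A$ with $A=d_{b,1}\,uu^T+d_{b,2}\,vv^T$, and must output $b$.

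Next I would strengthen the oracle: on a unit-vector query $x$, instead of returning $x^TAx$, it returns the pair $\psi_b(x):=\big(\sqrt{d_{b,1}}\,\langle x,u\rangle,\ \sqrt{d_{b,2}}\,\langle x,v\rangle\big)\in\R^2$, for a fixed choice of unit eigenvectors (the eigenvalues are labelled consistently across $b$, since $d_{b,1}<d_{b,2}$ in both cases). Since $x^TAx=\|\psi_b(x)\|_2^2$, any algorithm in the original model is an algorithm in this model, so a query lower bound here implies one for the original oracle. In this stronger model $\psi_b$ is linear, and I claim that, after symmetrization, the algorithm may as well query the fixed orthonormal frame $e_1,\dots,e_k$: rotational invariance of the hidden pair lets us assume the first query is $e_1$, and inductively, conditioned on the responses to $e_1,\dots,e_j$, the law of $(u,v)$ is invariant under rotations of the coordinates $e_{j+1},\dots,e_n$, so the component of the next query inside $\span(e_{j+1},\dots,e_n)$ may be taken to be $e_{j+1}$ while its component along $e_1,\dots,e_j$ contributes nothing new to the transcript. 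I expect this reduction — making precise that ``adaptivity does not help'' in the stronger model, which is exactly the subtlety flagged in the overview — to be the main obstacle, the rest being essentially a computation.

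With adaptivity out of the way, the transcript is $T_b=\big(\sqrt{d_{b,1}}\,u_i,\ \sqrt{d_{b,2}}\,v_i\big)_{i=1}^k$, where $(u_i),(v_i)$ are the first $k$ coordinates of two random orthonormal vectors in $\R^n$. Taking $n$ sufficiently large (the lower bound is for $n=n(\eps)$ large), a standard estimate shows that the first $k$ coordinates of a random orthonormal pair, rescaled by $\sqrt n$, are within total variation $o(1)$ of $k$ i.i.d.\ draws from $N(0,I_2)$ once $k=o(\sqrt n)$ (cf.~\cite{LNW14}); hence $\sqrt n\,T_b$ is $o(1)$-close to $k$ i.i.d.\ samples of $N(0,\Lambda_b)$ with $\Lambda_b=\diag(d_{b,1},d_{b,2})$. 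By Theorem~\ref{thm:klg}, the KL divergence between the two transcript laws is $\tfrac k2\big(\trace(\Lambda_2^{-1}\Lambda_1)-2-\ln(\det\Lambda_1/\det\Lambda_2)\big)+o(1)=\tfrac k2\sum_{i=1}^2 g(t_i)+o(1)$, where $t_i=d_{1,i}/d_{2,i}$ and $g(t)=t-1-\ln t$; the common factor $1/\sqrt5$ cancels, $t_i=1+O(\eps)$ since $\eps_1,\eps_2=O(\eps)$, and $g(1)=g'(1)=0$, so $g(t_i)=O(\eps^2)$ and the whole divergence is $O(k\eps^2)+o(1)$.

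Finally, Pinsker's inequality (Theorem~\ref{thm:kl}) gives $d_{TV}(T_1,T_2)=O(\sqrt k\,\eps)+o(1)$, so by Proposition~\ref{claim:vd} any algorithm distinguishing $P_1$ from $P_2$ with probability at least $2/3$ must have $\tfrac12+\tfrac12 d_{TV}(T_1,T_2)\ge 2/3$, forcing $O(\sqrt k\,\eps)\ge\tfrac13-o(1)$, i.e.\ $k=\Omega(1/\eps^2)$, which is the claim of Lemma~\ref{lem:testlb}.
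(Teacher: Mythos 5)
Your proposal is correct and follows essentially the same route as the paper: strengthen the oracle to return the two inner products $\sqrt{d_{b,j}}\langle x, u\rangle$ and $\sqrt{d_{b,j}}\langle x, v\rangle$ separately (this is the paper's ``strong query''), argue via rotational invariance that adaptive queries may as well be a fixed orthonormal frame (this is the paper's Lemma~\ref{lem:strong}, proved by the same induction you sketch), pass to the Gaussian surrogate via the submatrix-of-orthogonal-matrix TV bound of~\cite{LNW14}, and finish with the Gaussian KL formula plus Pinsker. Your bookkeeping of the KL divergence as $\tfrac{k}{2}\sum_i g(t_i)$ with $g(t)=t-1-\ln t$ is a cleaner packaging of the paper's direct expansion but computes the same quantity, and your observation that the lower eigenvalue stays lower under the perturbation (so the labelling is consistent) is the same implicit fact the paper relies on.
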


Combining Lemma~\ref{lem:testlb} and Lemma~\ref{lem:rank2}, we have that
for any estimator with variance $\delta=\eps_3^2/60$, we need at least $\Omega(1/\delta)$ queries.

\begin{proof}(Proof of Lemma~\ref{lem:rank2}). Without loss of generality, let us assume that our estimator $h$ is symmetric.  By assumption, we have $\E_{h,A_i}[\Var(h, A_i)]\leq \eps_3^2/60$ for $i=1,2$. The randomness comes from the distribution of $A_i$ as well as the estimator $h$.  We use the fact that a symmetric estimator has the same variance on any matrix with the same diagonalization.

Consider the following algorithm for  distinguishing $A_1$ and $A_2$: it uses $h$ to first get an estimation $h(A)$ on the trace of the unknown matrix $A$.  Since every matrix in the support of $P_1$ has trace $3/ \sqrt{5}$ while every matrix in the support of $P_2$ has trace $(3+\eps_3)/ \sqrt{5}$, the algorithm will output that $A$ comes from the distribution $P_1$ if $h(A) \leq (3+\frac{1}{2}\eps_3)/ \sqrt{5}$, and $P_2$ otherwise.  Below we will show the accuracy of the above algorithm is at least $2/3$.

\ignore{
Given a unknown matrix from $A_1$ or $A_2$, by an averaging argument, we know then on at most $1/3$ of the $output$, the square of the estimation  error's above $\eps_3^2/4$ as the overall variance is less than $\eps_3^2/12$ . Therefore, for at least $2/3$ of the output, the estimator error square is less than $\eps_3^2/4$; i.e., the output is with an accuracy $\eps_3/2$ of the true trace. Since the trace difference between $A_1$ and $A_2$ is at least $\eps_3$, we know that  the algorithm will successfully distinguish $P_1,P_2$ with probability at least $2/3$.}

We know that $h(A)$ is a random variable satisfying $\E[(h(A) - trace(A))^2] \leq \eps_3^2/60$, where the randomness is only over the internal randomness of $h$.  By Chebyshev's inequality, we have $\Pr[| h(A)-\trace(A)| \ge \eps_3/(2\sqrt{5})] \le 1/3$. Thus, with probability at least $2/3$, if $A$ is drawn from $P_1$, then $h(A)<(3+\frac{1}{2}\eps_3)/\sqrt{5}$, so the algorithm succeeds. Similarly, with probability at least $2/3$, if $A$ is drawn from $P_2$, then $h(A)>(3+\frac{1}{2}\eps_3)/\sqrt{5}$, so the algorithm succeeds in this case as well.

%By Markov's inequality, we have $\Pr[(h(A) - \trace(A))^2 \leq \eps_3^2/5] \leq 1/3$, and we have $\Pr[|h(A) - \trace(A)| < \eps_3/2] \leq \Pr[|h(A) - \trace(A)| < \eps_3/\sqrt{5}] \leq 1/3$.  Thus, with probability at least $2/3$, if $A$ is drawn from $P_1$, then $h(A) < 1 + \eps_3/2$, so the algorithm succeeds.  Similarly, with probability at least $2/3$, if $A$ is drawn from $P_2$, then $h(A) > 1 + \eps_3/2$, so the algorithm succeeds in this case as well.
\end{proof}

\begin{proof}(Proof of Lemma~\ref{lem:testlb}) First let us define a general problem.
\begin{definition} [distinguishing rank $2$ matrices]\label{def:general} We are given a matrix that is from one of the following two classes of distributions.
\begin{enumerate}
\item Distribution $P_{\alpha_1,\beta_1}$: $A_1 = u_1 u_1^T +  v_1v_1^T$ for $u_1 = \sqrt{\alpha_1} u$ and $v_1 = \sqrt{\beta_1} v$ for $u,v$ to be two  uniformly random orthogonal unit vectors.\item Distribution $P_{\alpha_2,\beta_2}$: $A_2 =  u_2 u_2^T +  v_2v_2^T$ for $u_2 = \sqrt{\alpha_2} u$ and $v_2 = \sqrt{\beta_2} v$ for $u,v$ to be two uniformly random orthogonal unit vectors.
\end{enumerate}
\end{definition}
It is easy to see that our goal is to understand the sample complexity of distinguishing $P_1= P_{\frac{1}{\sqrt{5}},\frac{2}{\sqrt{5}}}$ from $P_2=P_{\frac{1+\eps_1}{\sqrt{5}}, \frac{2-\eps_2}{\sqrt{5}}}$.

Recall that $\eps_1,\eps_2=O(\eps)$. We will prove the testing problem of distinguishing $P_1$ from $P_2$ requires $\Omega(1/\eps^2)$ even with stronger queries.  We define a \emph{strong query} to be a query $x$ that, instead of returning $x^T A x = (u \cdot x)^2 + (v \cdot x)^2$, returns $u \cdot x$ and $v \cdot x$.  The vectors $u$ and $v$ are vectors used to construct $A$ from a draw of $P_1$ or $P_2$ in Definition~\ref{def:general}.  It suffices to show that we need at least $\Omega(1/\eps^2)$ strong queries to distinguish $P_1$ from $P_2$.

One interpretation of the information given by a strong query is that it gives us the projection of $u_j,v_j$ on the query point $x$. After making $i$ queries $x_1,x_2,\ldots,x_i$, we have the projection of both $u_j,v_j$ on $\span_i= \span(x_1,x_2,\ldots, x_i)$.

We claim that if we are allowed to make strong queries, then it suffices to only consider estimators whose queries are a set of random orthogonal unit vectors.  The main idea is that, after any set of queries and responses, the essential structure of the problem, given this information, doesn't change too much, so the optimal next query is easy to determine.

\begin{lemma}\label{lem:strong} Without loss of generality, we can assume that the $k$ strong queries are a set of $k$ random orthogonal unit vectors, to distinguish $P_{\alpha_1,\beta_1}$ from $P_{\alpha_2,\beta_2}$ for any $\alpha_1,\alpha_2,\beta_1,\beta_2\geq 0$.
\end{lemma}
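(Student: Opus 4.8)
The plan is to apply the symmetrization machinery already developed, adapted to the strong-query model. First I would observe that everything in Section~\ref{sec:sym} goes through for strong queries: the "strong oracle" for a matrix $A = uu^T + vv^T$ (built from orthogonal $u,v$) is completely determined by the pair of vectors $(u,v)$, and a rotation $U$ acts on a draw from $P_{\alpha,\beta}$ by replacing $(u,v)$ with $(Uu, Uv)$ — which is again a draw from the same distribution, since $u,v$ are uniformly random orthogonal. So I would define the rotation $h^U$ of a strong-query estimator exactly as in Definition~\ref{def:rotation-estimator}: $h^U$ feeds its queries through $U$ and interprets the strong responses $u \cdot (Ux) = (U^Tu) \cdot x$, $v \cdot (Ux) = (U^Tv)\cdot x$ as the responses $h$ would see on the matrix built from $(U^Tu, U^Tv)$. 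Since $(U^Tu, U^Tv)$ has the same distribution as $(u,v)$, the success probability of $h^U$ in distinguishing $P_{\alpha_1,\beta_1}$ from $P_{\alpha_2,\beta_2}$ equals that of $h$; averaging over Haar-random $U$ (Lemma~\ref{lem:rotation} and Lemma~\ref{lem:averaging}) gives a symmetric strong-query estimator that does at least as well. Hence it suffices to analyze symmetric strong-query estimators.

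Next I would argue that for a symmetric strong-query estimator the choice of queries can be taken to be $k$ random orthogonal unit vectors. The key structural point, flagged in the text, is that after $i$ queries $x_1,\ldots,x_i$ the information obtained is exactly the orthogonal projections of $u$ and $v$ onto $\span_i = \span(x_1,\ldots,x_i)$ — call these $P_i u$ and $P_i v$. I would show by induction on $i$ that, conditioned on the whole transcript so far, the conditional law of $(u,v)$ is rotationally symmetric within the orthogonal complement $\span_i^\perp$: that is, writing $u = P_i u + u^\perp$, $v = P_i v + v^\perp$, the pair $(u^\perp, v^\perp)$ is distributed as a (correlated) pair of vectors whose joint law is invariant under any orthogonal transformation of $\span_i^\perp$. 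This holds at $i = 0$ because $(u,v)$ is rotationally invariant in $\R^n$, and is preserved at each step because revealing $P_i u, P_i v$ only pins down the components in $\span_i$ and leaves the complement's law invariant under $O(\span_i^\perp)$ — this is where the symmetrized (rotation-averaged) nature of the estimator is used, so that the adaptive choice of $x_{i+1}$ cannot depend on anything breaking that symmetry. Given this invariance, the response to any unit query $x_{i+1}$ depends (in distribution) only on the norms $\|P_i^\perp x_{i+1}\|$ and the angle of $x_{i+1}$ to $\span_i$, not on its direction inside $\span_i^\perp$; and a query with a component inside $\span_i$ is dominated (gives strictly less new information) by its normalized projection onto $\span_i^\perp$. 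Therefore each successive query may, without loss of generality, be taken orthogonal to all previous ones, yielding $k$ random orthogonal unit vectors.

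I expect the main obstacle to be making the induction in the second paragraph fully rigorous — precisely formulating and maintaining the "conditional rotational invariance of $(u^\perp,v^\perp)$ within $\span_i^\perp$" under adaptive, possibly randomized queries, and cleanly justifying that a query with a nontrivial component in the already-explored subspace $\span_i$ yields no information beyond what that component already revealed, so it can be discarded. A secondary point to handle with care is the interface with symmetrization: one must check that the symmetrized estimator's adaptive query distribution is genuinely $O(\span_i^\perp)$-equivariant given the transcript, which is exactly what lets the induction close. Once these are in place, reducing to orthogonal queries is immediate, and the lemma follows; the quantitative $\Omega(1/\eps^2)$ bound for such orthogonal-query estimators is then carried out separately (via the KL/Pinsker computation of Theorems~\ref{thm:kl}--\ref{thm:klg} applied to the resulting Gaussian-like coordinate distributions).
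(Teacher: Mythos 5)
Your proposal is correct and takes essentially the same approach as the paper's proof: an induction showing that, conditioned on the transcript, the components of $(u,v)$ orthogonal to $\span_{i-1}$ remain rotationally symmetric within $\span_{i-1}^\perp$, so the $i$th query can without loss of generality be replaced by its normalized projection onto $\span_{i-1}^\perp$ (the in-span component reveals nothing new, since $P_{i-1}u,P_{i-1}v$ are already known) and then be taken to be a uniformly random unit vector there. The only cosmetic difference is that you route through the explicit symmetrization machinery of Section~\ref{sec:sym}, whereas the paper argues directly from the rotational invariance of the distributions $P_{\alpha,\beta}$; the inductive invariance and the residual informality you flag are present in both.
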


\begin{proof}
The proof is by induction on $k$.  First, we prove that the first query can be assumed to be a random unit vector.  This follows from the fact that the distributions $P_1$ and $P_2$ are rotation-invariant.  Thus, if we query $Ux_1$ instead of $x_1$ for an orthogonal transformation $U$, the distribution over the result of a query does not change due to the distribution of $u$ and $v$.

\ynote{say something more??}
Suppose that we have made queries $x_1, x_2, \ldots, x_{i-1}$ which are sampled from $i-1$ random orthogonal unit vectors. First we claim that without loss of generality, we can assume that  the $i$th query is orthogonal to $\span_{i-1}= \span(x_1,x_2, \ldots, x_{i-1})$. If not, we can always set the $i$th query to be its projection on $\span_{i-1}^{\perp}$, the orthogonal complement to $\span_{i-1}$ in $\R^n$. By doing this, we have the same amount of information as the resulting $\span_i$ does not change. Recall that by allowing strong queries, after making $i-1$ queries, all the information we have is the projection of $u_j$ and $v_j$ onto $\span_{i-1}$.

Next, we will show that we can assume $x_i$ is a random unit vector in $\span_{i-1}^{\perp}$. To see this,  given the querying result of $x_1, x_2, \ldots, x_{i-1}$, we know that the projection of $u_j,v_j$ (for the  unknown $j\in [2]$)  on $\span_{i-1}$. A crucial observation is that conditioned on any $x_1,x_2, \ldots, x_{i-1}$ as well as the result of the queries, we know that the projection of $u_j,v_j$ (for $j\in [2]$) behaves like random vectors of length $\sqrt{\alpha_j-l_u^{i-1}},\sqrt{\beta_j-l_v^{i-1}}$ lies in $\span_{i-1}^{\perp}$ for $l_u^{i-1}$ and $l_v^{i-1}$ being the  square length of $u,v$'s projection on $\span_{i-1}$ as $u,v$ are all randomly oriented. We also know that $i$th query completely lies in $\span_{i-1}^\perp$. Therefore, the $i$th query can be assumed to be a uniformly random unit vector in $\span_{i-1}^\perp$ by the same argument that we prove that $x_1$ can be assumed to be a uniformly random unit vector in $\R^n$.  Thus, $x_1,x_2,\ldots,x_i$ can be assumed to have the same distribution as $i$ random orthogonal unit vectors, completing the inductive step.
\end{proof}

Denote the results of the queries $x_1,x_2,\ldots,x_k$ as $(u_j\cdot x_1, v_j\cdot x_1),(u_j\cdot x_2,v_j\cdot x_2)\ldots,(u_j \cdot x_k,v_j\cdot x_k)$ for the unknown $j \in [2]$.  Because both $u_j,v_j$ and all $x_i$ are randomly oriented (conditioning on the $x_i$'s being orthogonal), we can further assume that $x_i = e_i$ for $e_i$ being the standard basis vector, that is, the vector with all its coordinate being $0$ except the $i$th coordinate being $1$. This does not change the distribution of the $k$ query results. Therefore, we can think of the optimal $k$-strong query test as a test for distinguishing the following two distributions over pairs of vectors in $\R^k$:

\begin{enumerate}
\item The distribution $U_1$, where a draw from $U_1$ is the pair $(u_{[k]}, \sqrt{2}v_{[k]})$, where $u$ and $v$ are random orthogonal unit vectors in $\R^n$.
\item The distribution $U_2$, where a draw from $U_2$ is the pair $(\sqrt{1+\eps_1} u_{[k]},\sqrt{2-\eps_2}v_{[k]})$, where $u$ and $v$ are random orthogonal unit vectors in $\R^n$.
\end{enumerate}

We note that in this case, we see only \emph{one} sample; the number of strong queries is the number of coordinates we see.  Here we use $u_{[k]},v_{[k]}$ to denote the first $k$ coordinates of $u,v$, and we have scaled to remove a naturally occurring factor of $(\frac{1}{5})^{1/4}$.  The success probability in this problem is exactly characterized by the total variation distance (see Definition~\ref{def:vd}) between $U_1,U_2$.  We prove that $d_{TV}(U_1,U_2)\leq k\sqrt{\eps}$ when $k=O(n^{1-c})$ for any $c>0$. The proof idea is that the vectors $u_{[k]}$ and $v_{[k]}$ have very similar distribution as a Gaussian unit vector (i.e., $N(0,1/n)^k)$ when $k = O(n^{1-c})$ for a constant $c > 0$.

%In the following definition, let $g,h$ be two Gaussian vectors distributed as $N(0,1/n)^n$.  
We will show that it is hard to distinguish between the following distributions:

\begin{enumerate}
\item The distribution $N_1$, where a draw from $N_1$ is the pair $(g_{[k]}, \sqrt{2}h_{[k]})$, where $g$ and $h$ are two Gaussian vectors distributed as $N(0,1/n)^n$.
\item The distribution $N_2$, where a draw from $N_2$ is the pair
$(\sqrt{1+\eps_1} g_{[k]},\sqrt{2-\eps_{2}}h_{[k]})$,  where $g$ and $h$ are two Gaussian vectors distributed as $N(0,1/n)^n$.
\end{enumerate}

We know that $u_{[k]}$ and $v_{[k]}$ can be viewed as the upper left $2 \times k$ submatrix of a random orthogonal matrix.  For this, we use a theorem due to Li et al.~\cite{LNW14}.

\begin{theorem} \label{thm:LNW} Given an $n$ dimensional random orthogonal matrix $U$ as well as a random Gaussian matrix, if we take an $r \times k$ submatrix $U_{r,k}$ of $U$ and an $r \times k$ submatrix $G_{r,k}$ of $G$. We have that
\[
        d_{TV}(G_{r,k},U_{r,k})=o(1)
\]
when $r k\leq n^{1-\Omega(1)}$.
\end{theorem}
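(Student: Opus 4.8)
The plan is to compare the density of the submatrix with that of an i.i.d.\ Gaussian matrix, and to show the two densities agree up to a factor $1+o(1)$ on a set of full Gaussian mass. First a remark on normalization: the comparison that makes sense (and that the application needs, in order to link $U_1$ with $N_1$) is between $\sqrt n\,U_{r,k}$ and an i.i.d.\ standard Gaussian $r\times k$ matrix $H$, equivalently between $U_{r,k}$ and $N(0,1/n)^{r\times k}$; and by transpose-invariance of the Haar measure we may assume $r\le k$. The block $M:=\sqrt n\,U_{r,k}$ has an explicit density, supported on $\{\|M\|_{op}<\sqrt n\}$ and proportional to $\det\!\big(I_r-\tfrac1n MM^T\big)^{(n-r-k-1)/2}$; one obtains it from the Haar density by integrating out the fibre of orthogonal completions of $M$, or equivalently from the Gram--Schmidt (polar) realization of a Haar matrix from a Gaussian one, which also shows that $\sqrt n\,U_{r,k}$ equals in law $(I_r+E)^{-1}H$ with $E=(\tfrac1n\Gamma^T\Gamma)^{1/2}-I_r$ a symmetric $r\times r$ error satisfying $\|E\|_F=\Theta(r/\sqrt n)$ (here $\Gamma\in\R^{n\times r}$ is standard Gaussian).

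Next, compare with the Gaussian density, which is proportional to $e^{-\frac12\|M\|_F^2}=e^{-\frac12\trace(MM^T)}$. Using
\[
\log\det\!\big(I_r-\tfrac1n MM^T\big)=-\tfrac1n\trace(MM^T)-\tfrac1{2n^2}\|MM^T\|_F^2-\cdots,
\]
the $-\tfrac12\trace(MM^T)$ terms cancel, leaving a constant (which is absorbed by the ratio of the two normalizing constants, whose Gamma-function asymptotics follow from Stirling) plus a fluctuating remainder. A second cancellation --- between the first-order fluctuations of $\tfrac1n\trace(MM^T)$ and of $\tfrac1{n^2}\|MM^T\|_F^2$ --- makes that remainder $o(1)$ on the event $\{\|M\|_F^2=rk+O(\sqrt{rk}),\ \|MM^T\|_F^2=\Theta(rk(r+k))\}$, which has Gaussian probability $1-o(1)$. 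Hence the density ratio is $1+o(1)$ off a set of Gaussian mass $o(1)$, so $d_{TV}(\sqrt n\,U_{r,k},H)=o(1)$ whenever $rk=o(n)$, in particular for $rk\le n^{1-\Omega(1)}$. (For $r$ a constant, which is the only case the application uses, one can bypass the density computation by inducting over the $r$ orthonormal rows: the scaled $k$-prefix of one uniform unit vector is $O(k/n)$-close to $N(0,I_k)$, and, conditioned on the earlier rows, the next row contributes a scaled $k$-prefix that is near-Gaussian and near-independent of the earlier ones since those earlier prefixes have norm only $O(\sqrt{k/n})$; Theorems~\ref{thm:klg} and~\ref{thm:kl} then yield $d_{TV}=O(r\sqrt{k/n})$.)

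The main obstacle is exactly this density comparison. Any crude estimate is controlled by the $\Theta(r/\sqrt n)$-size $r\times r$ normalization matrix $(\tfrac1n\Gamma^T\Gamma)^{1/2}-I_r$ appearing in the Gram--Schmidt picture, which is \emph{not} $o(1)$ over the full parameter range; the perturbation that genuinely governs the law of the block is the much smaller $\Theta(\sqrt{rk}/n)$, and bringing it out requires carrying through the two cancellations above together with the matching asymptotics of the Gamma-function normalizers. That bookkeeping is precisely what is done in~\cite{LNW14}, which the present paper invokes as a black box.
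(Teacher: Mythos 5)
This theorem is not proved in the paper at all: it is quoted directly from~\cite{LNW14} and invoked as a black box (the paper explicitly writes ``we use a theorem due to Li et al.''), and you correctly flag this at the end of your write-up. So there is no internal proof to compare against, and your submission is in effect a sketch of the argument one would expect to find in the cited source. As such a sketch, it follows what I take to be the standard route: write down the exact density of the $r\times k$ corner of a Haar orthogonal matrix, $\propto \det(I_r-\tfrac1n MM^T)^{(n-r-k-1)/2}$, take the log of its ratio against the Gaussian density, expand the $\log\det$, and chase cancellations. You also catch a genuine normalization glitch in the paper's statement: with $G$ as in Definition~\ref{def:transformation} (i.i.d.\ $N(0,1)$ entries), $d_{TV}(G_{r,k},U_{r,k})\to 1$ rather than $o(1)$; the intended comparison, and the one Lemma~\ref{lem:testlb} actually needs, is between $U_{r,k}$ and a matrix with $N(0,1/n)$ entries, equivalently between $\sqrt n\,U_{r,k}$ and a standard Gaussian. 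That correction is worth recording.

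A few caveats on the sketch itself. The ``second cancellation'' is the heart of the matter and is stated rather than verified: summing the $\Delta_1:=\trace(MM^T)-rk$ contributions across all orders of the $\log\det$ expansion together with the $+\tfrac12\trace(MM^T)$ term leaves a residue of size about $\tfrac{(r+1)}{2(n-k)}\Delta_1$, which is $\Theta(\sqrt{rk}/n)$ only when $r=O(1)$ (for general $r\le k$ it is $O(rk/n)$), and one must additionally control the fluctuations of $\trace((MM^T)^j)$ that are \emph{not} driven by $\Delta_1$ (e.g.\ the $O(rk)$ off-diagonal fluctuation of $\|MM^T\|_F^2$, and the higher Wishart traces), all of which do land in $o(1)$ under $rk\le n^{1-\Omega(1)}$ but require bookkeeping you only gesture at. In the Gram--Schmidt representation $\sqrt n\,U_{r,k}=(I_r+E)^{-1}H$, note that $E$ and $H$ are not independent (both are functions of the same Gaussian $\Gamma$), so this picture supports intuition but cannot by itself be turned into the bound; the density route is the one that closes. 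Finally, the inductive alternative for constant $r$ is plausible but not a direct consequence of Theorems~\ref{thm:klg} and~\ref{thm:kl}: the conditional law of the $k$-prefix of the $i$th orthonormal row given the earlier ones is not Gaussian, so one still needs the $r=1$ TV bound (as used separately, via~\cite{Kho06}, in Section~\ref{sec:approxlb}) plus a coupling or conditioning argument, not merely a KL computation between two Gaussians. None of this affects the paper, since it uses the theorem as cited, but it does mean your sketch should be read as an outline of the approach in~\cite{LNW14} rather than a self-contained proof.
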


We can apply Theorem~\ref{thm:LNW} to get that
$d_{TV}(U_1,N_1), d_{TV}(U_2,N_2)\leq o(1)$ when $k=O(n^{1-\Omega(1)})$.

If we think of the pairs of vectors in $N_1$ and $N_2$ as concatenated vectors of length $2k$, we can write the $KL$ divergence between $N_1$ and $N_2$ as

\[
d_{KL}(N_1,N_2)= d_{KL}\left(N(0,\Sigma_0), N(0,\Sigma_1)\right)
\]

where

\[
\Sigma_0 =
\left[\begin{matrix}
\mathrm{Id}_{k \times k} & 0_{k \times k} \\
0_{k \times k} & 2 \cdot \mathrm{Id}_{k \times k} \\
\end{matrix}\right]
\qquad \mathrm{\; and \;} \qquad
\Sigma_1 =
\left[\begin{matrix}
(1 + \eps_1) \cdot \mathrm{Id}_{k \times k} & 0_{k \times k} \\
0_{k \times k} & (2 - \eps_2) \cdot \mathrm{Id}_{k \times k} \\
\end{matrix}\right]
\]

for some $\eps_1,\eps_2>0$ being $O(\eps)$.  By Theorem~\ref{thm:klg}, we have that
\begin{eqnarray*}
&&d_{KL}(N_1,N_2)\\ &=&  k (1+\eps_1) + k(1-\eps_2/2)-2k -k \ln(1+\eps_1)- k\ln(1-\eps_2/2) \\&=& O(k(\eps^2_1+\eps_2^2))\\ &=&O(k\eps^2)
\end{eqnarray*}

By Theorem~\ref{thm:kl}, we have that $d_{TV}(N_1,N_2) \leq O(\eps\sqrt{k})$.  Using the triangle inequality of variation distance, overall we have that $d_{TV}(U_1,U_2) \leq d_{TV}(N_1,N_2) + o(1) = O(\eps\sqrt{k})$. Therefore, we need at least $k= \Omega(1/\eps^2)$ strong queries in order to distinguish $P_1$ from $P_2$ with probability $2/3$, completing the proof.
\end{proof}

\section{Every $(\eps,\delta)$-estimator requires $\Omega(\frac{1}{\eps^2}\log\frac{1}{\delta} )$-queries}\label{sec:approxlb}
% !TEX root =main.tex

Our main theorem in this section is the following.

\begin{theorem} \label{thm:approxlb}For $0< \delta, \eps<1$, any $(\eps,\delta)$-estimator will require
  $\Omega(\frac{1}{\eps^2}\log(\frac{1}{\delta}))$ queries. This is
  true even if the unknown matrix $A$ has rank $1$.
    \label{th:nonadaptive}
\end{theorem}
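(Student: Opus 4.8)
The plan is to follow the same high-level recipe as in Section~\ref{sec:varlb}: reduce the $(\eps,\delta)$-approximation task to a hypothesis-testing problem between two rotation-invariant distributions of matrices, exploit symmetrization to assume the estimator is symmetric (hence queries are, without loss of generality, a set of random orthogonal unit vectors), and then lower-bound the number of queries needed to distinguish the two distributions. The new ingredient, compared with Theorem~\ref{thm:varlb}, is the $\log(1/\delta)$ factor: to get it, the two distributions must be $\Theta(\eps)$-separated in trace \emph{and} statistically indistinguishable unless one sees $\Omega(\eps^{-2}\log(1/\delta))$ query results. Since the excerpt says the lower bound holds even for rank-$1$ matrices, I would take the two distributions to be rank-$1$: $P_1$ is the orbit of $D_1 = uu^T$ (so $\trace = 1$), and $P_2$ is the orbit of $D_2 = (1+c\eps)^2\, uu^T$ for a suitable constant $c$, i.e. the matrix $(1+c\eps)^2 uu^T$ with $u$ a random unit vector; then $\trace(D_2) = (1+c\eps)^2$, so a correct $(\eps,\delta)$-estimator that errs with probability $<\delta$ on both distributions would distinguish them with error probability $<\delta$. (One must normalize Frobenius norms if Definition~\ref{def:var-estimator}'s normalization is in force; here it is cleaner to just rescale, or to note that the relevant definition for $(\eps,\delta)$-estimators, Definition~\ref{def:eps-delta-estimator}, is scale-free, so no normalization is needed.)

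The first step is the reduction: by Lemma~\ref{lem:rotation} and Lemma~\ref{lem:averaging} we may assume the estimator $h$ is symmetric, so it has the same output distribution on every matrix in a given orbit; in particular its behavior on $P_1$ (resp.\ $P_2$) is the same as on the fixed diagonal matrix $D_1$ (resp.\ $D_2$). An $(\eps,\delta)$-estimator outputs a value in $((1-\eps)\trace(A),(1+\eps)\trace(A))$ with probability $\ge 1-\delta$; if the separation $|\trace(D_1)-\trace(D_2)|$ exceeds $\eps(\trace(D_1)+\trace(D_2))$, the intervals are disjoint, so thresholding $h$ gives a test distinguishing $P_1$ from $P_2$ with error probability $<\delta$. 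Then, exactly as in Lemma~\ref{lem:strong}, I introduce the strong-query model: a query $x$ returns $u\cdot x$ (the single rank-$1$ direction), which only gives the tester more power; by the same inductive argument (rotation-invariance of $P_1,P_2$; replacing a query by its projection onto the orthogonal complement of the previous span loses no information) the $k$ strong queries may be taken to be $e_1,\dots,e_k$, so the tester sees only $u_{[k]}$, the first $k$ coordinates of a random unit vector $u\in\R^n$, in one case, and $(1+c\eps)\,u_{[k]}$ in the other (with a known scalar rescaling). Thus everything reduces to: distinguishing one sample of $u_{[k]}$ from one sample of $(1+c\eps) u_{[k]}$ requires $\Omega(\eps^{-2}\log(1/\delta))$ coordinates in order to have error probability below $\delta$.

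The final step is this last distributional claim, and it is where the $\log(1/\delta)$ enters and where the main obstacle lies. Using Theorem~\ref{thm:LNW}, for $k = O(n^{1-\Omega(1)})$ the distribution of $u_{[k]}$ is $o(1)$-close to $N(0,1/n)^k$, and the corresponding Gaussian problem is: distinguish $N(0,\sigma^2 \mathrm{Id}_k)$ from $N(0,(1+c\eps)^2\sigma^2\mathrm{Id}_k)$ with one sample. For this problem the tight bound on the error probability of the best test (a threshold on $\|g\|^2$, a $\chi^2_k$ statistic) is well known: the total variation distance is $1 - \exp(-\Theta(k\eps^2))$ once $k\eps^2 \gtrsim 1$, equivalently, to push the error probability below $\delta$ one needs $k = \Omega(\eps^{-2}\log(1/\delta))$. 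I would prove the needed direction — that if $k = o(\eps^{-2}\log(1/\delta))$ then $d_{TV} = 1-\omega(\delta)$, so no test can have error probability $<\delta$ — either by a direct second-moment/Chernoff estimate on the likelihood ratio $(1+c\eps)^{-k}\exp\!\big(\tfrac{c\eps(2+c\eps)}{2(1+c\eps)^2\sigma^2}\|g\|^2\big)$, bounding the probability that it is large under $N_1$ and small under $N_2$ via standard $\chi^2$ tail bounds (Laurent–Massart), or by invoking the exact formula for $d_{TV}$ between two centered Gaussians with proportional covariances. The reason this is the crux — and the reason Avron–Toledo flagged it — is precisely that naive bounds on the $\chi^2$ CDF are too lossy to yield the clean $\log(1/\delta)$ dependence; the symmetrization reduction is what makes the calculation tractable, because it collapses the adaptive, real-valued, infinite-branching decision tree down to a single one-shot Gaussian discrimination problem where sharp tail inequalities apply. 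Finally I would choose $n$ large enough (polynomial in $1/\eps$ and $1/\delta$, say $n = (\eps\delta)^{-\Omega(1)}$) so that the $o(1)$ Gaussian-approximation error from Theorem~\ref{thm:LNW} is $o(\delta)$ and does not swamp the bound, completing the proof of Theorem~\ref{thm:approxlb}.
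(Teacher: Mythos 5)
Your proposal matches the paper's proof essentially step for step: reduce to distinguishing the rank-one orbits of $uu^T$ and $(1+\Theta(\eps))uu^T$; symmetrize and pass to the strong-query oracle so that, via Lemma~\ref{lem:strong}, the queries can be taken to be $e_1,\dots,e_k$ and the task collapses to one-shot discrimination of $u_{[k]}$ from a $(1+\Theta(\eps))$-scaled copy, then to $N(0,1)^k$ versus $N(0,1+\theta)^k$ after a Gaussian approximation; finally, establish that pushing the error below $\delta$ requires $k=\Omega(\eps^{-2}\log(1/\delta))$ via a density-ratio calculation on the annulus $\{\|z\|^2\approx k\}$ controlled by the Laurent--Massart $\chi^2$ tail bounds (the paper's Theorem~\ref{thm:tv-normal}), exactly the likelihood-ratio/$\chi^2$-tail argument you sketch. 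The only cosmetic differences are the parametrization $(1+c\eps)^2$ in place of $(1+3\eps)$ and citing Theorem~\ref{thm:LNW} where the paper uses the Khoklov bound for this rank-one case; neither changes the argument.
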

We will assume that $\eps<1/3$.  As in the previous section, we reduce this estimation problem to the problem of distinguishing two distributions.   If we directly use the same problem defined in section~\ref{sec:varlb}, we can only get an $\Omega(\frac{1}{\eps})$ lower bound, and the lower bound comes from distributions supported on matrices with rank at least $2$. 
Here we consider a different problem for which we can also get the optimal dependence on $\eps$ and $\delta$, and uses distributions over rank $1$ matrices.  We will show that the following decision problem is hard.

\begin{definition}~\knote{Not quite a definition I guess...}
Given a uniformly random unit vector $u$, we want to
  distinguish the following two distributions:
  \begin{enumerate}
  \item The distribution $P_1$, where a draw from $P_1$ is the matrix $A_1 = u u^T$.
  \item The distribution $P_2$, where a draw from $P_2$ is the matrix $A_2= (1+3\eps)uu^T$.
  \end{enumerate}
The distinguisher has query access to an oracle that, on input $x \in \R^n$ and unknown matrix $A$ (from $P_1$ or $P_2$), returns $(\trace(A))u\cdot x$.  
  \end{definition}

The following lemma suggests that the above problem is easier than the trace estimating problem.
\begin{lemma} If we have an $(\eps,\delta)$-estimator that makes $k$ queries, then we can use it to distinguish the above two cases with accuracy at least $1-\delta$ using $k$ queries.
\end{lemma}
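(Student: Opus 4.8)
The plan is to turn the given $(\eps,\delta)$-estimator $h$ into a distinguisher by running $h$ as a black box on \emph{simulated} query responses. A draw from $P_i$ is a rank-one matrix $A = c\, uu^T$ with $c = \trace(A) \in \{1, 1+3\eps\}$ and $u$ a uniformly random unit vector, and the decision oracle returns $r(x) = c\,(u\cdot x)$ on query $x$. The key observation is that $r(x)^2 = c^2(u\cdot x)^2 = x^T(c^2 uu^T)x$. So whenever $h$ issues a query $x_i$ (possibly chosen adaptively from previous responses), the distinguisher queries the decision oracle at $x_i$ and hands $h$ the value $r(x_i)^2$; then $h$ sees exactly the transcript it would see when estimating the trace of $\hat A := c^2\, uu^T$, where $c = \trace(A)$.

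First I would record that $\hat A$ is a rank-one positive semidefinite matrix, so the $(\eps,\delta)$-guarantee of $h$ applies to it: with probability at least $1-\delta$ over $h$'s internal coins (hence also after averaging over the random $u$), its output $h(\hat A)$ lies in $\bigl((1-\eps)\trace(\hat A),(1+\eps)\trace(\hat A)\bigr)$, with $\trace(\hat A) = c^2$. Next I would identify the two target values: under $P_1$, $c^2 = 1$; under $P_2$, $c^2 = (1+3\eps)^2$. Then I would check that these windows are separated: since $\eps < 1/3$,
\[
(1-\eps)(1+3\eps)^2 = 1+5\eps+3\eps^2(1-3\eps) \geq 1+5\eps > 1+2\eps > 1+\eps ,
\]
so $\bigl((1-\eps),(1+\eps)\bigr)$ and $\bigl((1-\eps)(1+3\eps)^2,(1+\eps)(1+3\eps)^2\bigr)$ are disjoint. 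The distinguisher therefore outputs ``$P_1$'' if $h(\hat A)\leq 1+2\eps$ and ``$P_2$'' otherwise; it is correct with probability at least $1-\delta$ in either case, and it makes exactly $k$ oracle queries, as required. Adaptivity of $h$ causes no trouble here, since we merely forward its queries and the squared answers.

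There is no deep obstacle in this (reduction) direction — it is routine — but the one subtlety worth flagging is that the natural manipulation of the oracle's \emph{linear} output $r(x)$ is to square it, which makes $h$ estimate $\trace(A)^2$ rather than $\trace(A)$; one then has to verify that the multiplicative windows around $1$ and $(1+3\eps)^2$ remain disjoint, which is precisely the inequality $(1-\eps)(1+3\eps)^2 > 1+\eps$ for $\eps < 1/3$, and one must note that $\hat A$ stays rank-one and PSD so that feeding these responses is consistent with the promise in Theorem~\ref{thm:approxlb}.
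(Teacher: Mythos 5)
Your reduction is correct, and it follows the same basic strategy as the paper's proof: run the estimator on simulated responses and threshold its output. However, you flag --- and correctly resolve --- a subtlety that the paper's one-line argument glosses over. The distinguisher's oracle returns $r(x)=\trace(A)\,(u\cdot x)$, while $h$ expects $f_A(x)=\trace(A)\,(u\cdot x)^2$; the distinguisher cannot recover $f_A(x)$ from $r(x)$ without already knowing $\trace(A)$, which is exactly the quantity to be decided. The only quantity it can hand to $h$ is $r(x)^2 = f_{\hat A}(x)$ with $\hat A=\trace(A)^2\,uu^T$, so, as you observe, $h$ in fact produces an $\eps$-approximation to $\trace(\hat A)=\trace(A)^2\in\{1,(1+3\eps)^2\}$ rather than to $\trace(A)$, and the separation one must check is $(1-\eps)(1+3\eps)^2>1+\eps$ for $\eps<1/3$, which you verify. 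The paper's proof instead treats $h$ as if it directly estimated $\trace(A)\in\{1,1+3\eps\}$ and thresholds at $1+\eps$; that step tacitly presupposes access to $f_A$ that the stated oracle does not literally provide. Your squaring workaround, together with the observation that $\hat A$ is still rank-one PSD so the $(\eps,\delta)$-promise applies, is the right patch, and the conclusion is unaffected since the two pairs of traces are equally distinguishable.
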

\begin{proof}
When the matrix has trace $1+3\eps$, an $(\eps,\delta)$-estimator will output a value above $(1+3\eps)(1-\eps)=1+2\eps-3\eps^2 > 1+\eps$ with probability at least $1-\delta$. On the other hand, when the matrix has trace $1$, the estimator will output some value below $1+\eps$ with probability at least $1-\delta$. Therefore, we output the distribution $P_1$ when the estimator's output is below $1+\eps$ and $P_2$ otherwise, this allows us to distinguish $P_1$ from $P_2$ with success probability at least $1-\delta$.
\end{proof}

We proceed to prove an $\Omega(\frac{1}{\eps^2}\log \frac{1}{\delta})$ lower bound for distinguishing $P_1$ from $P_2$ with accuracy $1-\delta$, which leads to essentially the same lower bound for the query complexity of $(\eps,\delta)$-estimator for the trace.

  \begin{lemma}
  \label{thm:len}
  If we make less than
  $\Omega\left(\frac{1}{\eps^2}\log (\frac{1}{\delta})\right)$ queries, we can not
  distinguish the above two cases with probability at least $1-\delta$.
\end{lemma}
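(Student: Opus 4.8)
\textbf{Proof proposal for Lemma~\ref{thm:len}.}
The plan is to mirror the structure of the $\Omega(1/\eps)$ lower bound in Section~\ref{sec:varlb} (reduce to a distribution-distinguishing problem, symmetrize, pass to Gaussians), but to replace the final Pinsker-inequality step by a sharper estimate that is not lossy when $\eps\sqrt{k}$ is large; this sharper step is where the extra $\log\tfrac1\delta$ factor comes from. By Proposition~\ref{claim:vd}, if $T_1$ and $T_2$ denote the distributions of the query transcript when $A\sim P_1$ and $A\sim P_2$ respectively, then no distinguisher succeeds with probability exceeding $\tfrac12+\tfrac12 d_{TV}(T_1,T_2)$, so it suffices to show that $d_{TV}(T_1,T_2)<1-2\delta$ whenever $k=o\!\left(\tfrac1{\eps^2}\log\tfrac1\delta\right)$.

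First I would reduce to a clean one-sample problem, exactly as in Lemma~\ref{lem:strong}. Both $P_1$ and $P_2$ are rotation-invariant ensembles of rank-$1$ matrices, and the oracle response $\trace(A)\,(u\cdot x)$ is linear in the query $x$. Hence the distinguisher may be symmetrized (rotate every query by a common random orthogonal $U$), and then an induction on the query index shows: (i) replacing a query by its projection onto the orthogonal complement of the span of the previous queries loses no information, since the removed ``parallel'' part of the response is a fixed linear combination of earlier responses; and (ii) conditioned on the previous orthonormal queries and their responses, the component of $u$ in that orthogonal complement is a uniformly random direction of a fixed magnitude (the same conditional structure under $P_1$ and $P_2$), so the next query may be taken to be an arbitrary unit vector there. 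Consequently the queries may be taken to be $e_1,\dots,e_k$, and distinguishing $P_1$ from $P_2$ with $k$ queries becomes: given one sample, distinguish the law $Q_1$ of $u_{[k]}$ from the law $Q_2$ of $(1+3\eps)\,u_{[k]}$, where $u$ is uniform on the unit sphere of $\R^n$ and $u_{[k]}$ is its first $k$ coordinates.

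Next I would pass to Gaussians and then carry out the sharp estimate, which I expect to be the main obstacle. By Theorem~\ref{thm:LNW} (one row, $r=1$), $d_{TV}\!\left(Q_1,\,N(0,\tfrac1n I_k)\right)=o(1)$ and, by scale-invariance of total variation, $d_{TV}\!\left(Q_2,\,N(0,\tfrac{(1+3\eps)^2}{n} I_k)\right)=o(1)$, provided $k=n^{1-\Omega(1)}$; we are free to take $n$ as large as we like in the hard instance, so this is no restriction. Writing $M_1=N(0,\tfrac1n I_k)$ and $M_2=N(0,\tfrac{(1+3\eps)^2}{n} I_k)$, the triangle inequality gives $d_{TV}(Q_1,Q_2)\le d_{TV}(M_1,M_2)+o(1)$. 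The point is that Pinsker would only yield $d_{TV}(M_1,M_2)\le O(\eps\sqrt k)$, which is useless once $\eps\sqrt k\gtrsim 1$; instead I would use the Bhattacharyya affinity $\rho(P,Q)=\int\sqrt{pq}$, which tensorizes multiplicatively. Since $M_1,M_2$ are product measures over the $k$ coordinates, $\rho(M_1,M_2)=\rho_1^{\,k}$ where $\rho_1=\bigl(\tfrac{2(1+3\eps)}{1+(1+3\eps)^2}\bigr)^{1/2}$ is the affinity of the two one-dimensional Gaussians; as $\tfrac{2(1+3\eps)}{1+(1+3\eps)^2}=1-\tfrac{9\eps^2}{1+(1+3\eps)^2}\ge 1-\tfrac92\eps^2$ for $\eps<\tfrac13$, we get $\rho(M_1,M_2)^2\ge(1-\tfrac92\eps^2)^k\ge e^{-9\eps^2 k}$. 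Finally, Cauchy--Schwarz applied to $\sqrt{\min(p,q)}\cdot\sqrt{\max(p,q)}$ gives the elementary bound $1-d_{TV}(P,Q)\ge\tfrac12\bigl(\int\sqrt{pq}\bigr)^2$, hence $d_{TV}(M_1,M_2)\le 1-\tfrac12 e^{-9\eps^2 k}$.

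Putting the pieces together, any $k$-query distinguisher succeeds with probability at most $\tfrac12+\tfrac12 d_{TV}(Q_1,Q_2)\le 1-\tfrac14 e^{-9\eps^2 k}+o(1)$. Choosing $n$ large enough that the $o(1)$ term is below $\delta/2$, a success probability of at least $1-\delta$ forces $\tfrac14 e^{-9\eps^2 k}\le\tfrac32\delta$, i.e. $k\ge\tfrac1{9\eps^2}\ln\tfrac1{6\delta}=\Omega\!\left(\tfrac1{\eps^2}\log\tfrac1\delta\right)$ (for $\delta$ bounded away from a constant; when $\delta=\Omega(1)$ the $\Omega(1/\eps^2)$ bound of Section~\ref{sec:varlb} already applies). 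This proves Lemma~\ref{thm:len}, and hence Theorem~\ref{thm:approxlb}.
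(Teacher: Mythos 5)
Your reduction to the one-sample problem matches the paper exactly: both of you symmetrize, invoke Lemma~\ref{lem:strong} to pass to random orthonormal strong queries, reduce to distinguishing the laws of $u_{[k]}$ and $(1+3\eps)u_{[k]}$, and then approximate by Gaussians (you invoke Theorem~\ref{thm:LNW} with $r=1$; the paper cites Khoklov for the one-row bound $d_{TV}(U_1,N_1)=O(k/n)$, but both give $o(1)$ for $k$ suitably small relative to $n$, which suffices).

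Where you genuinely diverge is in the final step, the total variation lower bound between $N(0,\sigma_1^2 I_k)$ and $N(0,\sigma_2^2 I_k)$. The paper proves this as the stand-alone Theorem~\ref{thm:tv-normal} by a direct density argument: it defines a typical set $S=\{z:\|z\|_2^2\in(k-c,k+c)\}$ with $c=\Theta(\sqrt k)$, bounds the density ratio $f_1/f_2$ pointwise on $S$, lower-bounds $\Pr_{P_i}(S)$ via the $\chi^2$ tail estimate (Theorem~\ref{thm:chi}), and combines these to force $r_S\le C\delta$. You instead use the Bhattacharyya affinity $\rho(P,Q)=\int\sqrt{pq}$: it tensorizes exactly over the $k$ independent coordinates, the one-dimensional Gaussian affinity $\rho_1=\bigl(2\sigma_1\sigma_2/(\sigma_1^2+\sigma_2^2)\bigr)^{1/2}$ is in closed form, and Le Cam's inequality $1-d_{TV}\ge\tfrac12\rho^2$ (your Cauchy--Schwarz argument is correct, since $\int\max(p,q)\le 2$) turns a lower bound on $\rho$ into the desired bound $d_{TV}\le 1-\tfrac12 e^{-O(\eps^2 k)}$. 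Both routes yield the same asymptotics; yours is arguably cleaner and more modular because it avoids choosing a cutoff set and invoking concentration of $\chi^2$, at the mild cost of only applying directly to product measures (which is all that is needed here). Your algebra checks out: $\rho_1^2=1-\tfrac{9\eps^2}{1+(1+3\eps)^2}\ge 1-\tfrac92\eps^2$, and for $\eps\le 1/3$ one has $(1-\tfrac92\eps^2)^k\ge e^{-9\eps^2 k}$, giving success probability at most $1-\tfrac14 e^{-9\eps^2 k}+o(1)$ and hence $k=\Omega(\eps^{-2}\log(1/\delta))$ once the $o(1)$ is absorbed by choosing $n$ large. One small remark: your fallback to ``the $\Omega(1/\eps^2)$ bound of Section~\ref{sec:varlb}'' for constant $\delta$ is not quite the right citation (that section bounds variance, not $(\eps,\delta)$-success), but it is also unnecessary --- your own bound already gives $k=\Omega(1/\eps^2)$ for any fixed $\delta$ bounded below $\tfrac14$, since $e^{-9\eps^2 k}$ is bounded away from $0$ whenever $\eps^2 k=O(1)$.
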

\begin{proof}Notice that above problem is a special case of the problem defined in Definition~\ref{def:general}.  In the notation there, we want to distinguish $P_{1,0}$ and $P_{1+3\eps,0}$.  Therefore, we can apply Lemma~\ref{lem:strong}, and assumes that the $k$ queries are a set of random orthogonal unit vectors. Similar to the argument in Lemma~\ref{lem:testlb}, the above problem is equivalent to distinguish the following two distributions $U_1,U_2$ with one sample. 
 \begin{enumerate}
\item $U_1$: $u_{[k]}$.
\item $U_2$:$ (1+3\eps)u_{[k]}$.
\end{enumerate}
Here $u$ is a random unit vector, and $u_{[k]}$ is the first $k$ coordinates of $u$.

Again, we define the corresponding Gaussian distribution problem as follows.
\begin{enumerate}
\item $N_1$:  $g_{[k]}$.
\item $N_2$:  $ (1+3\eps )g_{[k]}$.
\end{enumerate}
for $g$ being a random vector whose entries are independent and distributed as $N(0,1/n)$.
We know that $d_{TV}(U_1,N_1)= d_{TV}(U_2,N_2)$. Here $U_1$ is distributed as the first $k$ coordinates of a random unit vector and $N_1$ is  $N(0,1/n)^k$.

We apply the following bound due to Khoklov~\cite{Kho06}.
\begin{lemma} $d_{TV}(U_1,N_1) = O(k/n)$.
\end{lemma}

Specifically, when $k=o(n)$, the distance between $U_i$ and $N_i$ is $o(1)$.  It remains to analyze $d_{TV}(N_1,N_2)$. If we use the same proof as in Lemma~\ref{lem:testlb} by computing the KL divergence, we would only get that $d_{TV}(N_1,N_2)\leq O(k\sqrt{\eps})$ which implies that $k=\Omega(1/\eps^2)$, completely independent of $\delta$.
 
In order to involve $\delta$ in the lower bound, we prove the following theorem that might be of independent interest.
\begin{theorem}
\label{thm:tv-normal}
Let $P_1$ be $N(0,1)^k$ and $P_2$ be $N(0,1+\theta)^k$, for any $0< \delta\leq 1$ and $0\leq \theta < C_0$ for $C_0$ being any positive constant (such as $C_0=100$). For the inequality
$d_{TV}(P_1,P_2)\geq 1- \delta$ to hold, it is necessary that $k\geq \Omega(\frac{1}{\theta^2}\log\frac{1}{\delta}$).
\end{theorem}

Assuming the correctness of above theorem,  we have the proof of Theorem~\ref{thm:approxlb} by setting $\theta=3\eps$. It is easy to see that the assumption that $\eps < 1/3$ here can be replaced by any constant $\eps\leq c$ for any $c$ bounded away from $1$ by essentially the same proof with $A_2= (1+m\eps)uu^T$  for any $m$ such that $m=2/(1-c)$.
\end{proof}

\begin{proof}We know that the probability density functions of $P_1$ and $P_2$ are $f_1(z) = (\frac{1}{\sqrt{2\pi}})^k e^{-\frac{\|z\|_2^2}{2}}$ and $f_2(z) = (\frac{1}{\sqrt{2\pi(1+\theta)}})^k e^{-\frac{\|z\|_2^2}{2(1+\theta)}}$, respectively.
\ignore{We can write their total variation distance as 
\begin{eqnarray*}
&d_{TV}(P_1,P_2 ) &= \frac{1}{2}\int_{z} |f_1(z) - f_2(z)| \; \mathrm{d}z \\
\ignore{=\frac{1}{2}\int_{z} |1- \frac{f_2(z)}{f_1(z)}|f_1(z) \; \mathrm{d}z} 
\end{eqnarray*}}
Suppose that for all $z$ in some set $S$, we have $f_2(z)/f_1(z)\in (\alpha,1/\alpha)$ for some constant $\alpha < 1$.  Without loss of generality, assuming $f_2(z) \leq f_1(z)$, we have
\[
\frac{|f_1(z)-f_2(z)|}{f_1(z)+f_2(z)} = \frac{1-f_2(z)/f_1(z)}{1+f_2(z)/f_1(z)} \leq
\frac{1-\alpha}{1+\alpha}
\]
and thus
\[
|f_1(z)-f_2(z)| \leq \frac{1-\alpha}{1+\alpha} (f_1(z) + f_2(z)).
\]
It follows that
\[
\ignore{d_{TV}(P_1,P_2)}
\int_{z:z \in S} |f_1(z) - f_2(z)| \; \mathrm{d}z 
\leq \frac{1-\alpha}{1+\alpha} \cdot \int_{z:z \in S} (f_1(z)+f_2(z)) \; \mathrm{d}z
\ignore{\leq \frac{1-\alpha}{1+\alpha}}.
\]

Of course, it is impossible to find a constant $\alpha$ when $z$ can be arbitrary.  We define $S =\{z \ |\ \|z\|^2_2\in (k-c, k+c), z\in \R^n\}$ for some parameter $c$ which we will specify later.
Let us also denote  $r_S=\min_{z\in S}\min\left(\frac{f_1(z)}{f_2(z)},\frac{f_2(z)}{f_1(z)}\right)$; i.e., the minimum ratio between $f_1(z)$ and $f_2(z)$ over $z\in S$. 

We then have 
\begin{eqnarray*}
&&d_{TV}(P_1,P_2 )\\&=& \frac{1}{2}\int_{z} |f_1(z) - f_2(z)|\mathrm{d}z \\&\leq& \frac{1}{2} \left(\Pr_{z\sim P_1}(z \notin S)+\Pr_{z\sim P_2}(z \notin S) +\int_{z:z \in S} |f_1(z) - f_2(z)|\mathrm{d}z \right)
\\ &\leq&  \frac{1}{2}\left(\Pr_{z\sim P_1}(z \notin S)+\Pr_{z\sim P_2}(z \notin S) + \frac{1-r_S}{1+r_S} \left(\Pr_{z\sim P_1}(z \in S) +\Pr_{z\sim P_2}(z \in S)\right)\right) 
\\&=&1-\frac{r_S}{1+r_S}\cdot \left(\Pr_{z\sim
P_1}(z \in S) +\Pr_{z\sim P_2}(z \in S)\right)
\end{eqnarray*}

In order to have $d_{TV}(P_1,P_2)\geq 1-\delta$, we must have that 
\[
\frac{r_S}{1+r_S} \left(\Pr_{z\sim
P_1}(z \in S) +\Pr_{z\sim P_2}(z \in S)\right)\leq \delta
\]

which we can weaken to

\[
r_S \left(\Pr_{z\sim
P_1}(z \in S) +\Pr_{z\sim P_2}(z \in S)\right)\leq \delta
\]

since $r_S \ge 0$.  For $z \in S$, we know that

%\begin{eqnarray*}
%\frac{f_1(z)}{f_2(z)}
%&=& (1+\theta)^{\frac{k}{2}} e^{-\frac{\theta\|z\|_2^2}{2(1+\theta)}}\\
%&=&
%e^{\frac{k}{2}\left((\ln(1+\theta)-\frac{\theta}{1+\theta})-\frac{\theta}{1+\theta}(\frac{\|z\|^2}{k}-1)\right)}\\
%&\in&\left(e^{k\frac{\theta^2}{4}-\frac{c\theta}{1+\theta}}, e^{\half\left( k\theta^2 +\frac{c\theta}{(1+\theta)}\right)}\right)\\ 
%\end{eqnarray*}

\begin{eqnarray*}
&& \frac{f_1(z)}{f_2(z)}
\\ &=& (1+\theta)^{\frac{k}{2}} e^{-\frac{\theta\|z\|_2^2}{2(1+\theta)}}\\
&=&
\exp \left(\frac{k}{2}\ln(1+\theta)-\frac{\theta\|z\|_2^2}{2(1+\theta)}\right) \\
&\in&
\left(\exp \left( \frac{k}{2}\left(\ln(1+\theta)-\frac{\theta}{1+\theta}\right) - \frac{\theta c}{2(1+\theta)}\right), \right. \\ && 
 \left. \exp\left( \frac{k}{2}\left(\ln(1+\theta)-\frac{\theta}{1+\theta}\right) + \frac{\theta c}{2(1+\theta)}\right)\right)
\end{eqnarray*}

Defining $h(\theta) = \ln(1 + \theta) - \frac{\theta}{1 + \theta}$, we have $h'(\theta) = \frac{\theta}{(1 + \theta)^2}$.  For $0 < \theta < C_0$, we have $\frac{\theta}{(C_0+1)^2} \leq h'(\theta) \leq \theta$, so $\frac{\theta^2}{2(C_0+1)^2} \leq h(\theta) \leq \frac{\theta^2}{2}$ on this interval, and we have

\[
\frac{f_1(z)}{f_2(z)}
\in
\left(\exp \left( \frac{k\theta^2}{4(C_0+1)^2} - \frac{\theta c}{2(1+\theta)}\right),
\exp\left( \frac{k\theta^2}{4} + \frac{\theta c}{2(1+\theta)} \right) \right)
\]

\[
\frac{f_1(z)}{f_2(z)}
\in
\left(\exp \left( \frac{k\theta^2}{4(C_0+1)^2} - \frac{\theta c}{2}\right),
\exp\left( \frac{k\theta^2}{4} + \frac{\theta c}{2} \right) \right)
\]

So we can take
\begin{eqnarray*} &&r_S \\ &=& \min \left\{ \exp\left(-\dfrac{k\theta^2}{4} - \dfrac{\theta c}{2}\right), \exp\left(\dfrac{k\theta^2}{4(C_0+1)^2} - \dfrac{\theta c}{2}\right) \right\} \\&=& \exp\left(-\dfrac{k\theta^2}{4} - \dfrac{\theta c}{2}\right).\end{eqnarray*} The distribution on $\|z\|_2^2$ is a $\chi$-square distribution; we use the following tail estimate.

%For $0 \leq \theta \leq 1/3$, we have $\theta/2 \leq h'(\theta) \leq 2\theta$, so $\theta^2/4 \leq h(\theta) \leq \theta^2$ on this interval, and we have
%
%\[
%\frac{f_1(z)}{f_2(z)}
%\in
%\left(\exp \left( \frac{k\theta^2}{8} - \frac{\theta c}{2(1+\theta)}\right),
%\exp\left( \frac{k\theta^2}{2} + \frac{\theta c}{2(1+\theta)} \right) \right)
%\]
%
%\[
%\frac{f_1(z)}{f_2(z)}
%\in
%\left(\exp \left( \frac{k\theta^2}{8} - \frac{\theta c}{2}\right),
%\exp\left( \frac{k\theta^2}{2} + \frac{\theta c}{2} \right) \right)
%\]

%So we can take
%\begin{eqnarray*} &&r_S \\ &=& \min \left\{ \exp\left(-\dfrac{k\theta^2}{2} - \dfrac{\theta c}{2}\right), \exp\left(\dfrac{k\theta^2}{8} - \dfrac{\theta c}{2}\right) \right\} \\&=& \exp\left(-\dfrac{k\theta^2}{2} - \dfrac{\theta c}{2}\right).\end{eqnarray*} The distribution on $\|z\|_2^2$ is a $\chi$-square distribution; we use the following tail estimate.

\begin{theorem}[Tail of $\chi$-square distribution]~\cite{LM00} \label{thm:chi}Let $X\sim \chi_k^2$, then 
\begin{itemize}
\item $\Pr(X>k +2\sqrt{k} t+2t^2 )\leq e^{-t^2}$
\item $\Pr(X<k+2\sqrt{k} t)\leq e^{-t^2}$.
\end{itemize}
\end{theorem}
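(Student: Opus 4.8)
The plan is to prove the two tails by the textbook Chernoff (exponential moment) method, which is exactly the Laurent--Massart argument specialized to all weights equal to $1$. One preliminary remark: as written, the second bullet cannot hold for $t>0$, since the mean (and roughly the median) of $\chi^2_k$ is $k$; the intended inequality is $\Pr(X<k-2\sqrt{k}t)\le e^{-t^2}$, and that is what I would prove. Throughout I would work with the centered variable $Z=X-k$, where $X=\sum_{i=1}^k Y_i^2$ for i.i.d.\ $Y_i\sim N(0,1)$, using the moment generating function $\E[e^{\lambda Z}]=e^{-\lambda k}(1-2\lambda)^{-k/2}$ (valid for $\lambda<1/2$) and $\E[e^{-\lambda Z}]=e^{\lambda k}(1+2\lambda)^{-k/2}$ (valid for all $\lambda>0$), and substituting $t\mapsto t^2$ only at the very end.

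For the upper tail, Markov's inequality applied to $e^{\lambda Z}$ gives $\Pr(Z\ge x)\le\exp(-\lambda x+\ln\E[e^{\lambda Z}])$ for $\lambda\in(0,1/2)$. I would bound $\ln\E[e^{\lambda Z}]=\frac{k}{2}\bigl(-2\lambda-\ln(1-2\lambda)\bigr)$ using the elementary identity $-y-\ln(1-y)=\sum_{j\ge 2}y^j/j\le \frac{y^2}{2(1-y)}$ for $y\in[0,1)$ with $y=2\lambda$, obtaining $\ln\E[e^{\lambda Z}]\le \frac{k\lambda^2}{1-2\lambda}$. Then I would plug in the specific choice $\lambda=\frac{\sqrt t}{\sqrt k+2\sqrt t}\in(0,1/2)$; a one-line computation gives $1-2\lambda=\frac{\sqrt k}{\sqrt k+2\sqrt t}$, hence $\frac{k\lambda^2}{1-2\lambda}=\lambda\sqrt{kt}$, so at $x=2\sqrt{kt}+2t$ the exponent is $-\lambda(2\sqrt{kt}+2t)+\lambda\sqrt{kt}=-\lambda(\sqrt{kt}+2t)=-t$. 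This yields $\Pr(X-k\ge 2\sqrt{kt}+2t)\le e^{-t}$, and replacing $t$ by $t^2$ gives $\Pr(X>k+2\sqrt{k}t+2t^2)\le e^{-t^2}$, the first bullet.

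For the lower tail, Markov applied to $e^{-\lambda Z}$ with $\lambda>0$ gives $\Pr(Z\le -x)\le\exp(-\lambda x+\ln\E[e^{-\lambda Z}])$. Here I would use $\ln(1+y)\ge y-y^2/2$ for $y\ge 0$ (which follows since the difference vanishes at $0$ and has derivative $\frac{y^2}{1+y}\ge 0$) with $y=2\lambda$, so that $\ln\E[e^{-\lambda Z}]=\frac{k}{2}\bigl(2\lambda-\ln(1+2\lambda)\bigr)\le k\lambda^2$, giving $\Pr(Z\le -x)\le\exp(-\lambda x+k\lambda^2)$. Optimizing over $\lambda>0$ (take $\lambda=x/(2k)$) yields $\Pr(X-k\le -x)\le e^{-x^2/(4k)}$, and setting $x=2\sqrt{kt}$ gives the clean bound $e^{-t}$; replacing $t$ by $t^2$ gives $\Pr(X<k-2\sqrt{k}t)\le e^{-t^2}$.

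The only steps that are not pure bookkeeping are the two scalar inequalities above and, for the upper tail, the choice of $\lambda$: it has to be selected so that the exponent collapses to exactly $-t$ at the prescribed threshold $2\sqrt{kt}+2t$, and finding that value (equivalently, recognizing that $\lambda=\frac{\sqrt t}{\sqrt k+2\sqrt t}$ is forced by setting $\frac{\lambda}{1-2\lambda}=\sqrt{t/k}$) is the one place where some care is needed; everything else is routine Chernoff bounding.
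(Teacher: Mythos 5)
Your proof is correct and is exactly the standard Laurent--Massart Chernoff argument; the paper itself gives no proof and simply cites \cite{LM00}, so there is nothing to compare against beyond noting that your route is the canonical one. You are also right that the second bullet as printed is a typo: $\Pr(X<k+2\sqrt{k}t)$ exceeds $1/2$ for all $t\ge 0$ and so cannot be bounded by $e^{-t^2}$ once $t\ge\sqrt{\ln 2}$; the intended (and, in \cite{LM00}, actual) statement is $\Pr(X<k-2\sqrt{k}t)\le e^{-t^2}$, which is what the paper in fact needs when it takes $c=4\sqrt{k}$ to conclude $\Pr_{z\sim P_i}(z\in S)=\Omega(1)$ for the two-sided window $\|z\|_2^2\in(k-c,k+c)$. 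All the computational steps you flag as the only delicate ones (the two scalar inequalities and the choice $\lambda=\sqrt{t}/(\sqrt{k}+2\sqrt{t})$) check out.
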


We now set $c=4\sqrt{k}$, so that we have $\Pr_{z\sim
P_1}(z \in S) = \Omega(1)$ and  $\Pr_{z\sim
P_2}(z \in S) = \Omega(1)$.  This implies that $\exp \left(-\dfrac{k\theta^2}{4} - 2\sqrt{k}\theta\right) \leq C\delta$ for an absolute constant $C$, which in turn implies that $k$ should be $\Omega((1/\theta^2)\log (1/\delta))$. Since for any pair of distributions $N(0, \sigma_1^2)^k$ and $N(0, \sigma_2^2)^k$, after applying a factor $1/\sqrt{n}$ on both $\sigma_1$ and $\sigma_2$, the total variation distance doesn't change,  we complete the proof.

\end{proof}

%%% Local Variables: 
%%% mode: latex
%%% TeX-master: "main"
%%% End: 

\section*{Acknowledgement}
The second author is grateful for Yi Li and Siu-On Chan for helpful discussions.

\bibliographystyle{alpha}
\bibliography{everything}

%\appendix
%\section*{Appendix}

%\section{Every $\delta$-variance estimator requires  $\Omega(1/\delta)$-queries}\label{sec:varlb}
%\input{variance}
%
%\section{Every $(\eps,\delta)$-estimator requires $\Omega(\frac{1}{\eps^2}\log\frac{1}{\delta} )$-queries}\label{sec:approxlb}
%\input{approxlb}
%\input{appendix}

\end{document}